\newtheorem{problem}{Problem}
\newtheorem{remark}{Remark}
\newtheorem{lemma}{Lemma}
\newtheorem{definition}{Definition}
\newtheorem{assumption}{Assumption}
\newtheorem{theorem}{Theorem}
\newtheorem{example}{Example}
\newcommand{\bx}{{\boldsymbol{x}}}
\newcommand{\bu}{{\boldsymbol{u}}}
\newcommand{\by}{{\boldsymbol{y}}}
\newcommand{\bz}{{\boldsymbol{z}}}
\newcommand{\bv}{{\boldsymbol{v}}}
\newcommand{\br}{{\boldsymbol{r}}}
\newcommand{\hbd}{{\boldsymbol{\hat{d}}}}
\newcommand{\dbd}{{\boldsymbol{\dot{d}}}}
\newcommand{\ed}{{\boldsymbol{e}_d}}
\newcommand{\bd}{{\boldsymbol{d}}}
\newcommand{\argmin}{\operatorname{argmin}}    % This command is only needed if 
\title{\LARGE \bf
Disturbance Observer-based Robust Integral Control Barrier Functions for Nonlinear Systems with High Relative Degree
}
\author{Vrushabh Zinage \and Rohan Chandra \and Efstathios Bakolas% <-this % stops a space
%\thanks{*This work was not supported by any organization}% <-this % stops a space
\thanks{This research has been supported  in part by NSF award CMMI-1937957.}
\thanks{Vrushabh Zinage (graduate student) and Efstathios Bakolas (Associate Professor) are with the Department of Aerospace Engineering and Engineering Mechanics,
The University of Texas at Austin, Texas 78712-1221, USA, 
Rohan Chandra is with the Department of Computer Science, The University of Texas at Austin, Austin, Texas 78712, USA, 
{\tt\small \{vrushabh.zinage,rchandra,bakolas\}@utexas.edu}}
}
\begin{document}

\bibliographystyle{IEEEtran} 

\maketitle
\thispagestyle{empty}
\pagestyle{empty}

%%%%%%%%%%%%%%%%%%%%%%%%%%%%%%%%%%%%%%%%%%%%%%%%%%%%%%%%%%%%%%%%%%%%%%%%%%%%%%%%
\begin{abstract}
In this paper, we consider the problem of safe control synthesis of general controlled nonlinear systems in the presence of bounded additive disturbances. Towards this aim, we first construct a governing augmented state space model consisting of the equations of motion of the original system, the integral control law and the nonlinear disturbance observer. Next, we propose the concept of Disturbance Observer based Integral Control Barrier Functions (DO-ICBFs) which we utilize to synthesize safe control inputs. The characterization of the safe controller is obtained after modifying the governing integral control law with an additive auxiliary control input which is computed via the solution of a quadratic problem. In contrast to prior methods in the relevant literature which can be unnecessarily cautious due to their reliance on the worst case disturbance estimates, our DO-ICBF based controller uses the available control effort frugally by leveraging the disturbance estimates computed by the disturbance observer. By construction, the proposed DO-ICBF based controller can ensure state and input constraint satisfaction at all times. Further, we propose Higher Order DO-ICBFs that extend our proposed method to nonlinear systems with higher relative degree with respect to the auxiliary control input. Finally, numerical simulations are provided to validate our proposed approach.
% Using Lipschitz analysis, we show that generating safe control inputs for the neural network based model would also guarantee safety for the unknown system. Further, 

% \begin{keywords}
% Computational methods, Modeling, Robotics
% \end{keywords}
% Further, the  The Lipschitz continuity of observables is ensured using spectral normalization of neural networks. Numerical simulations are provided to validate the efficacy of our approach. 
\end{abstract}

%%%%%%%%%%%%%%%%%%%%%%%%%%%%%%%%%%%%%%%%%%%%%%%%%%%%%%%%%%%%%%%%%%%%%%%%%%%%%%%%
\section{Introduction}
Control Barrier Functions (CBF) have been proven to be effective in guaranteeing safety for control-affine systems and applied in many real-world applications such as aerospace \cite{breeden2023robust_aerospace}, robotics \cite{chandra2023decentralized_robotics_1,zinage2023neural_cbf}, multi-agent systems \cite{jankovic2022multi_agent_systems} etc. The inherent characteristics of CBFs make them suitable as safety filters for nominal stabilizing controllers that might not have been originally designed with safety in mind.

Traditional CBF's guarantee safety for control-affine systems by iteratively solving a quadratic program (QP). Furthermore, solving the QP iteratively can lead to recursive feasibility which would be detrimental in safety-critical applications. This makes the application of CBF slightly restrictive to some input-constrained real-world applications such as hypersonic pursuit \cite{lee2022feedback_hypersonic}, flapping birds \cite{chirarattananon2017dynamics_flapping_birds,zhang2022bio_flapping_bird_2}, soft robots \cite{george2020first_soft_robots,rus2015design_soft_robots_2}, and robotic systems with Ackermann steering geometry \cite{bascetta2016kinematic_ackermann} where the governing dynamics is non-control affine in nature. To address this issue, recently \cite{ames2020integral_cbf} proposed Integral Control Barrier functions (ICBF's) which is able to encode the state as well as input constraints directly onto a scalar function. Furthermore, by leveraging these ICBF's, they were able to modify the original integral control law via an additive auxiliary control input which was designed to guaranteed safety. However, when there is additional uncertainty in the model or external disturbances, the safety guarantees provided by ICBF's can be compromised or change. In addition, the analysis of the paper is only restricted to systems of relative degree one where the relative degree is with respect to the auxiliary control input.

To handle external disturbances in the model, robust versions of traditional CBFs have been proposed \cite{alan2023parameterized_robust_1,jankovic2018robust_2,nguyen2021robust_3,garg2021robust_4,alan2022disturbance_robust_6,taylor2021towards_robust_7,jankovic2018robust_robust_9,emam2021data_robust_10} that use what is known or assumed about the dynamics not included in the model. To manage these uncertainties, some methods have set maximum uncertainty limits as shown in \cite{jankovic2018robust_cbf_1} and \cite{nguyen2021robust_cbf_2}, but these can be overly cautious. On the other hand, input-to-state safety (ISSf) describes how disturbances can change the safety range. This method tries to lessen the overly cautious approach by setting limits on how much safety can decrease. Yet, even with ISSf techniques, there might still be notable uncertainties as seen in \cite{alan2021safe_iss_cbf_1} and \cite{alan2023control_iss_cbf_2}. There are fewer restrictive adaptive control methods that deal with specific uncertainties \cite{lopez2020robust_adaptive}, but they do not account for disturbances that change over time.

 The Disturbance Observer (DOB) stands out among various methods that model/characterize disturbances. The main purpose of DOB is to efficiently estimate the external disturbances by leveraging the known nonlinear dynamics and states of model that can be measured. This methodology has found extensive use in areas like robotics, automotive, and power electronics \cite{sariyildiz2019disturbance_dist_1,mohammadi2017nonlinear_dist_2,chen2015disturbance_dist_3}. Unlike other robust control strategies that prepare for the worst-case scenarios, methods based on DOB look to minimize the impact of disturbances. They do this by offsetting these disturbances, striking an optimal balance between resilience and efficiency. Most current DOB-focused control strategies target systems where the disturbance's relative degree is at least as significant as the input's relative degree, as seen in \cite{yang2013static_dist_4}. Yet, there are many systems, like missile systems \cite{ginoya2013sliding_dist_5}, flexible joint manipulators [15], and PWM-based DC–DC buck power converters \cite{wang2015extended_dist_6}, where the disturbance's relative degree is lesser. 
 % New approaches to design DOBs for these specific systems have recently emerged.

The main contributions of the paper are as follows. First, we propose Disturbance Observer based Integral Control Barrier functions (DO-ICBFs) for safe control synthesis of non-affine nonlinear controlled systems with additive bounded disturbances. 
% In contrast to traditional CBF's that can encode state constraints only, the DO-ICBFs allow  encoding of the state and control constraints.
Towards this goal, we first derive an upper bound for the error between the estimated and the actual disturbances. Next, by utilizing these DO-ICBF's and the upper bound for the estimation error, the governing integral control law is modified via the addition of an auxiliary control input that is designed by solving a QP program. Second, we propose High Order DO-ICBFs to extend our approach to nonlinear systems with relative degree that is greater than one with respect to the auxiliary control input and propose High Order DO-ICBFs.

The overall structure of the paper is as follows. Section \ref{sec:prelimaries} discusses the preliminaries followed by the problem statement in Section \ref{sec:problem_statement}. In Section \ref{sec:main_result} we discuss the main results followed by numerical simulations in Section \ref{sec:results}. Finally, we make some concluding remarks in Section \ref{sec:conclusion}.
\section{Nomenclature\label{sec:nomenclature}}
Vectors are denoted by bold symbols. The interior and the boundary of a set $\mathcal{S}$ are denoted by $\text{int}(\mathcal{S})$ and $\partial\mathcal{S}$, respectively. For integers $a$ and $b\geq a$, we denote by $[a;b]_d$, the set of integers $\{a,a+1,\dots,b\}$. By default, for a vector $\boldsymbol{v}$, $\|\boldsymbol{v}\|$ denotes the Euclidean norm. $I_n$ and $0_n$ denote the identity and the zero matrix of dimension $n$, respectively.
\section{Preliminaries\label{sec:prelimaries}}
In this section, we summarize the notions of Control Barrier Functions (CBFs) and Integral Control Barrier Functions (I-CBFs) for dynamically defined control laws. Before we proceed, we first state the invariance lemma.
\begin{lemma}
\normalfont \cite{glotfelter2017nonsmooth_rd_2} Let $b:\left[t_0, t_1\right] \rightarrow \mathbb{R}$ be a continuously differentiable function. If $\dot{b}(t) \geq \alpha(b(t))$, for all $t \in\left[t_0, t_1\right]$, where $\alpha$ is a class $\mathcal{K}$ function, and $b\left(t_0\right) \geq 0$, then $b(t) \geq 0$ for all $ t \in\left[t_0, t_1\right]$.
\label{lemma:invariant_time}
\end{lemma}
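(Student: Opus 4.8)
The plan is to recognize this as a scalar forward-invariance (Nagumo-type) statement and to prove it directly via a first-crossing-time argument, which amounts to specializing the comparison lemma to the autonomous comparison equation $\dot y = \alpha(y)$. The structural fact to exploit is that $\alpha$, being class $\mathcal{K}$, vanishes at the origin, so the constant function $y \equiv 0$ solves $\dot y = \alpha(y)$; the hypothesis $\dot b(t) \ge \alpha(b(t))$ then says that $b$ is a supersolution of this comparison equation, and since it starts no lower than the barrier solution $y \equiv 0$, it should stay at or above it.

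Carrying this out: suppose, toward a contradiction, that $b(t^\ast) < 0$ for some $t^\ast \in (t_0, t_1]$, and set $\tau := \sup\{t \in [t_0, t^\ast] : b(t) \ge 0\}$. Since $b(t_0) \ge 0$ the set is nonempty, and continuity of $b$ together with $b(t^\ast) < 0$ forces $\tau < t^\ast$, $b(\tau) = 0$, and $b(t) < 0$ for every $t \in (\tau, t^\ast]$. Evaluating the hypothesis at $\tau$ yields $\dot b(\tau) \ge \alpha(b(\tau)) = \alpha(0) = 0$. If $\dot b(\tau) > 0$, this already contradicts the previous sentence, since then $b$ is strictly increasing on a right-neighborhood of $\tau$ and hence strictly positive just after $\tau$. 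In the remaining case $\dot b(\tau) = 0$, apply the comparison principle on $[\tau, t^\ast]$ with initial value $b(\tau) = 0$: comparing the supersolution $b$ against the solution $y \equiv 0$ of $\dot y = \alpha(y)$ gives $b(t) \ge 0$ on $[\tau, t^\ast]$, contradicting $b(t^\ast) < 0$. Hence no such $t^\ast$ exists and $b(t) \ge 0$ on all of $[t_0, t_1]$.

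The step I expect to be the main obstacle is the degenerate tangency case $\dot b(\tau) = 0$, i.e., making the comparison against $\dot y = \alpha(y)$ fully rigorous. A class $\mathcal{K}$ function need not be locally Lipschitz (for instance $\alpha(s) = \sqrt{s}$), so $\dot y = \alpha(y)$ may fail to have unique solutions and the textbook comparison lemma does not apply verbatim; I would instead invoke a version of the differential-inequality principle valid for merely continuous right-hand sides, comparing $b$ against the minimal solution of the comparison initial-value problem from $y(\tau) = 0$, which is the equilibrium $0$ because $\alpha(0) = 0$ and $\alpha$ is nonnegative on $[0,\infty)$. If one is willing to add the mild assumption that $\alpha$ is locally Lipschitz, this subtlety disappears and the standard comparison lemma closes the argument immediately. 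As a sanity check, the linear case $\alpha(s) = cs$ with $c > 0$ can be verified directly: the integrating-factor estimate gives $b(t) \ge b(t_0)\, e^{c (t - t_0)} \ge 0$ with no regularity issues at all.
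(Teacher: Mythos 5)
The paper never proves this lemma---it is imported verbatim with a citation to Glotfelter et al.---so your argument has to be judged on its own terms rather than against an in-paper proof. Your skeleton is the standard one: the first-crossing time $\tau$ with $b(\tau)=0$, the easy case $\dot b(\tau)>0$, and a comparison argument for the tangency case; those parts, and the observation that the whole difficulty sits in the tangency case, are correct.

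The genuine gap is in your resolution of that tangency case when $\alpha$ is not locally Lipschitz. Note first that for the hypothesis $\dot b(t)\geq\alpha(b(t))$ to be meaningful on the stretch $(\tau,t^\ast]$ where $b<0$---exactly the stretch your contradiction argument must examine---$\alpha$ has to be an \emph{extended} class-$\mathcal{K}$ function, negative for negative arguments (under the literal class-$\mathcal{K}$-on-$[0,\infty)$ reading the hypothesis already presupposes $b(t)\geq 0$ and the lemma is vacuous). In that extended setting your key claim, that the minimal solution of $\dot y=\alpha(y)$, $y(\tau)=0$ is the equilibrium $y\equiv 0$ ``because $\alpha(0)=0$ and $\alpha$ is nonnegative on $[0,\infty)$,'' is false: nonnegativity of $\alpha$ on $[0,\infty)$ only constrains solutions that stay nonnegative and does not prevent escape downward through a non-Lipschitz equilibrium. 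Concretely, for $\alpha(s)=\operatorname{sgn}(s)\sqrt{|s|}$ the function $y(t)=-\tfrac14(t-\tau)^2$ solves $\dot y=\alpha(y)$ with $y(\tau)=0$, so the minimal solution is strictly negative for $t>\tau$; worse, the same function taken as $b$ (with $t_0=\tau$) satisfies $\dot b=\alpha(b)$ and $b(t_0)=0\geq 0$ yet goes negative, so the lemma itself fails for non-Lipschitz extended class-$\mathcal{K}$ functions and no repair of that branch of your proof is possible. The fix you mention in passing---assume $\alpha$ locally Lipschitz---is not an optional convenience but the actual hypothesis of the cited result; with it, uniqueness of the zero solution of the comparison ODE (equivalently the standard comparison lemma) rules out the tangential escape, and your first two paragraphs then constitute a complete and correct proof.
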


Consider a general nonlinear system given by
\begin{align}
    \dot{\bx}=F(\bx,\bu),\quad \bx(0)=\bx_0
    \label{eqn:nonlinear_system_no_dist}
\end{align}
where $\bx\in\mathbb{R}^n$ is the state, $\bu\in\mathbb{R}^m$ is the control input, $F:\mathbb{R}^n\times\mathbb{R}^m\rightarrow\mathbb{R}^n$ is a continuously differentiable function. For a given locally Lipschitz control law $\bu=k(\bx)$ where $k:\mathbb{R}^n\rightarrow\mathbb{R}^m$, let $\Phi^k_F(\bx(0))$ denote the solution to the closed-loop system:
\begin{align}
  \dot{\bx}=F(\bx,k(\bx)), \quad \bx(0)=\bx_0
  \label{eqn:closed_loop_nonlinear_system}
\end{align}
\subsection{Control Barrier Functions (CBFs)\label{subsec:cbf}}
Let $\mathcal{S}\subset\mathbb{R}^n$ denote the safe set and $b(\bx):\mathbb{R}^n\rightarrow\mathbb{R}$ be a continuously differentiable scalar function such that the following holds:
\begin{subequations}
    \begin{align}
    &b(\bx) > 0,\quad\forall \bx\in\text{int}(\mathcal{S})\\
    &b(\bx)=0,\quad\forall \bx\in\partial\mathcal{S}\\
        &b(\bx)<0,\quad\forall \bx\in\mathbb{R}^n\setminus\mathcal{S}
\end{align}
\label{eqn:cbf_defn_nominal}
\end{subequations}
The Control Barrier Function (CBF) is then defined as follows:
% \begin{definition}
% \normalfont A set $\mathcal{S}$ exhibits forward invariance if, given any initial value $\bx_0\in\mathcal{S}$, the solution $\bx:=\Phi^k_F$ also resides in $\mathcal{S}$ for any time $t$ in the interval $[0,\tau)$,. The system \eqref{eqn:nonlinear_system} is considered safe relative to set $\mathcal{S}$ if it demonstrates forward invariance with respect to $\mathcal{S}$.
% \end{definition}
% If $\tau=\infty$, the closed-loop system \eqref{eqn:closed_loop_nonlinear_system} is forward complete.
\begin{definition}
    \normalfont Let $\mathcal{S}\subset \mathbb{R}^n$ and $b(\bx)$ be defined as in \eqref{eqn:cbf_defn_nominal}. Then, $h$ is
a control barrier function (CBF) for the nonlinear system $\dot{\bx}=F(\bx,\bu)=f(\bx)+g(\bx)\bu$, (where $f:\mathbb{R}^n\rightarrow\mathbb{R}^n$ and $g:\mathbb{R}^n\rightarrow\mathbb{R}^n\times\mathbb{R}^m$ are continuously differentiable functions) if there exists an extended
class-$\mathcal{K}_\infty$\footnote{A continuous function $\gamma$ is said to be class-$\mathcal{K}_\infty$ function if it is continuously increasing, $\gamma(0)=0$ and $\underset{x\rightarrow\infty}{\lim}\;\gamma(x)=0$.} function $\alpha$ such that
\begin{align}
    \underset{\bu\in\mathbb{R}^m}{\sup}\;L_fb(\bx)+L_gb(\bx)\bu\geq -\alpha(b(\bx))
    \label{eqn:sup_cbf_condition}
\end{align}
for all $\bx\in\mathbb{R}^n$.
\end{definition}
We define the set $K_{\text{CBF}}(\bx)$ satisfying \eqref{eqn:sup_cbf_condition} as follows:
\begin{align}
    K_{\text{CBF}}(\bx)=\{\bu\in\mathbb{R}^m:\;\dot{h}(\bx)\geq -\alpha(b(\bx))\}
\end{align}
The following theorem provides the invariance guarantees for controlled systems based on CBF
\begin{theorem}
\normalfont \cite{ames2019control} Consider a subset $\mathcal{S}\subset\mathbb{R}^n$, characterized as the 0-superlevel set of a continuously differentiable function $b:\mathbb{R}^n\rightarrow
\mathbb{R}$. If $b$ is a CBF, and $\frac{\partial b}{\partial\bx}\neq 0$ for every $\bx\in\partial S$, then any controller $\bu=k(\bx)$ with $k(\bx)\in K_{\text{CBF}}(\bx)$ for all $\bx\in\mathcal{S}$ that is Lipschitz continuous ensures that the set $\mathcal{S}$ will be forward invariant under the closed loop dynamics \eqref{eqn:closed_loop_nonlinear_system}. 
\end{theorem}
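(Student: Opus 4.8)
\emph{Proof proposal.} The plan is to translate forward invariance of $\mathcal{S}$ into a scalar differential inequality along closed-loop trajectories and then close the argument with the comparison principle underlying Lemma~\ref{lemma:invariant_time}. First I would use \eqref{eqn:cbf_defn_nominal} together with the hypothesis $\frac{\partial b}{\partial\bx}\neq 0$ on $\partial\mathcal{S}$ to fix the set identities $\mathcal{S}=\{\bx:b(\bx)\ge 0\}$, $\partial\mathcal{S}=\{\bx:b(\bx)=0\}$, and $\mathrm{int}(\mathcal{S})=\{\bx:b(\bx)>0\}$; the non-vanishing gradient condition is precisely what excludes degenerate level sets for which these identifications could fail, and it reduces the claim ``$\bx(t)\in\mathcal{S}$ for all $t$'' to the claim ``$b(\bx(t))\ge 0$ for all $t$.''

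Next I would set up the closed-loop flow \eqref{eqn:closed_loop_nonlinear_system} with $F(\bx,\bu)=f(\bx)+g(\bx)\bu$. Since $f,g$ are continuously differentiable and $k$ is Lipschitz, the map $\bx\mapsto f(\bx)+g(\bx)k(\bx)$ is locally Lipschitz, so from any $\bx_0\in\mathcal{S}$ there is a unique maximal solution $\bx(\cdot)=\Phi^k_F(\bx_0)$ on an interval $[0,t_{\max})$. Let $T\le t_{\max}$ be the largest time such that $\bx(s)\in\mathcal{S}$ for all $s\in[0,T)$, and set $\eta(t)=b(\bx(t))$, which is $C^1$. Differentiating along the solution gives $\dot\eta(t)=L_fb(\bx(t))+L_gb(\bx(t))k(\bx(t))$, and since $\bx(t)\in\mathcal{S}$ forces $k(\bx(t))\in K_{\text{CBF}}(\bx(t))$ by hypothesis, this yields the differential inequality $\dot\eta(t)\ge-\alpha(\eta(t))$ on $[0,T)$, with $\eta(0)=b(\bx_0)\ge 0$.

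The core step is the comparison argument. Consider the scalar problem $\dot y=-\alpha(y)$, $y(0)=\eta(0)\ge 0$: because $\alpha$ is an extended class-$\mathcal{K}_\infty$ function we have $\alpha(0)=0$, so $y\equiv 0$ solves it with zero initial data, and since $-\alpha(\cdot)$ is nonincreasing the forward flow is monotone in the initial condition; hence $y(t)\ge 0$ for as long as it exists. The comparison lemma then gives $\eta(t)\ge y(t)\ge 0$, i.e. $b(\bx(t))\ge 0$, on $[0,T)$, which is exactly the conclusion of Lemma~\ref{lemma:invariant_time} once the inequality is put into the form it requires. A standard continuation argument closes the loop: by continuity $\eta(T)\ge 0$, so $\bx(T)\in\mathcal{S}$, and if $T<t_{\max}$ the solution could be continued inside $\mathcal{S}$ past $T$, contradicting maximality; hence $T=t_{\max}$ and $\mathcal{S}$ is forward invariant under \eqref{eqn:closed_loop_nonlinear_system}. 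The step I expect to be most delicate is the comparison argument when $\alpha$ is merely continuous, so that $\dot y=-\alpha(y)$ need not have unique solutions — this is handled by the monotonicity observation above (or by lower-approximating $\alpha$ by locally Lipschitz class-$\mathcal{K}$ functions and passing to the limit), and it is also where one must phrase the inequality carefully so that the hypotheses of Lemma~\ref{lemma:invariant_time} literally apply.
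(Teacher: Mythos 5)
The paper does not actually prove this statement---it is quoted from \cite{ames2019control}---so your proposal can only be judged on its own merits. Your overall strategy (reduce to the scalar inequality $\dot b(\bx(t))\ge -\alpha(b(\bx(t)))$ along closed-loop trajectories and close with the comparison principle behind Lemma~\ref{lemma:invariant_time}) is the standard route, and your handling of a merely continuous $\alpha$ (monotone right-hand side, or approximation by Lipschitz class-$\mathcal{K}$ minorants) is fine; note only that Lemma~\ref{lemma:invariant_time} as printed reads $\dot b \ge \alpha(b)$, and your form $\dot b\ge-\alpha(b)$ is the intended one.

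The genuine gap is in the continuation step, and it sits exactly where the hypothesis $\frac{\partial b}{\partial\bx}\neq 0$ on $\partial\mathcal{S}$ does real work. The theorem only guarantees $k(\bx)\in K_{\text{CBF}}(\bx)$ for $\bx\in\mathcal{S}$, so your inequality $\dot\eta\ge-\alpha(\eta)$ is available only while the trajectory is in $\mathcal{S}$, i.e.\ on $[0,T]$. If $\eta(T)=0$, all you know at that instant is $\dot\eta(T)\ge-\alpha(0)=0$, which does not prevent $\eta$ from becoming negative immediately afterward (e.g.\ $\eta(t)=-(t-T)^3$ has $\eta(T)=0$, $\dot\eta(T)=0$); and once $\eta<0$ the hypothesis says nothing about $k$, hence nothing about $\dot\eta$. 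So the assertion ``the solution could be continued inside $\mathcal{S}$ past $T$'' is precisely the local forward-invariance-at-the-boundary claim being proved---the argument is circular there, while your stated worry (non-Lipschitz $\alpha$) is the benign part. Two standard repairs: (i) invoke Nagumo's (Bony--Brezis) sub-tangentiality theorem: since $\nabla b(\bx)\neq 0$ on $\partial\mathcal{S}$, the tangent cone to $\{b\ge 0\}$ at a boundary point is exactly $\{\bv:\langle\nabla b(\bx),\bv\rangle\ge 0\}$, the CBF condition on $\partial\mathcal{S}$ gives $\langle\nabla b,\,f+gk\rangle\ge 0$ there, and local Lipschitz continuity of the closed loop gives uniqueness, so the closed set $\mathcal{S}$ is forward invariant---this is the true role of the non-vanishing-gradient hypothesis, which in your write-up is used only to ``identify the level sets'' (an identification that already follows from \eqref{eqn:cbf_defn_nominal} alone); or (ii) as in the cited reference, require $k(\bx)\in K_{\text{CBF}}(\bx)$ on an open set $D\supset\mathcal{S}$, in which case the differential inequality persists past the boundary-hitting time and your comparison-plus-continuation argument closes as written.
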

The synthesis of safe control inputs that guarantee forward invariance for $\mathcal{S}$ can by obtained by solving the following QP:
\begin{subequations}
\begin{align}
  \textbf{CBF-QP}\quad &\bu_S(\boldsymbol{x}):=\underset{\bu\in\mathbb{R}^m}{\argmin}\quad\|\bu-k(\boldsymbol{x})\|^2\\
   &\text{s.t.} \quad L_fb(\bx)+L_gb(\bx)\bu\geq-\alpha(b(\bx)).
   \label{eqn:cbf_constraint}
\end{align}
 \label{eqn:quadratic_problem}
\end{subequations}
where $k(\bx)$ is the nominal feedback stabilizing controller for the system $\dot{\bx}=F(\bx,\bu)=f(\bx)+g(\bx)\bu$ in the absence of safety constraints.
% from $\bx_0$ to the desired trajectory obtained in absence of obstacles.
\subsection{Integral Control Barrier Functions (ICBF's)}
Consider the following general feedback law governed by the following ordinary differential equation:
\begin{align}
\dot{\bu}=\phi(\bx, \bu),\quad\quad \bu(0):=\bu_0
\label{eqn:u_governing_icbf}
\end{align}
where $\phi: \mathbb{R}^n \times \mathbb{R}^m \rightarrow \mathbb{R}^m$ is continuously differentiable. The combined system obtained by augmenting the governing state dynamics \eqref{eqn:nonlinear_system} and the input dynamics \eqref{eqn:u_governing_icbf} is given by
\begin{align}
\left[\begin{array}{l}
\dot{\bx} \\
\dot{\bu}
\end{array}\right]=\left[\begin{array}{l}
F(\bx, \bu) \\
\phi(\bx, \bu)
\end{array}\right] ,\quad \left[\begin{array}{l}
\bx(0) \\
\bu(0)
\end{array}\right]=\left[\begin{array}{l}
\bx_0 \\
\bu_0
\end{array}\right]
\label{eqn:integrated_augmeted_system}
\end{align}
Denote by $\bz:=\left[\bx^{\mathrm{T}}, \bu^{\mathrm{T}}\right]^{\mathrm{T}} \in \mathbb{R}^{n+m}$ as the augmented state, which can essentially be considered as the state of the integrated system \eqref{eqn:integrated_augmeted_system}.

Consider $\mathcal{S} \subset \mathbb{R}^n \times \mathbb{R}^m$, termed the safety set that encodes both the state constraints as well as the input constraints. This set is characterized as the 0-superlevel set of a smoothly differentiable function $h:\mathbb{R}^n \times \mathbb{R}^m \rightarrow \mathbb{R}$. Moreover, assume the existence of a class-$\mathcal{K}$ function $\gamma$ such that, for every trajectory $\bz$ of the integrated system \eqref{eqn:integrated_augmeted_system}, the following holds true:
\begin{align}
\dot{h}(\bz)+\gamma(h(\bz)) \geq 0 
\label{eqn:governing_h_dot_augmented_system}
\end{align}
where $\dot{h}(\bz)$ is the time derivative along the system trajectories \eqref{eqn:integrated_augmeted_system}.
Define the vector function $p(\bx, \bu)$ and the scalar function $q(\bx, \bu) $ as follows: 
\begin{align}
&p(\bx, \bu)  :=\left(\frac{\partial h(\bx,\bu)}{\partial \bu}\right)^{\mathrm{T}}\nonumber \\
&q(\bx, \bu):  =-\left(\frac{\partial h(\bx,\bu)}{\partial \bx} F(\bx, \bu)+\frac{\partial h(\bx,\bu)}{\partial \bu} \phi(\bx, \bu)\right.\nonumber \\
& +\gamma(h(\bx, \bu))),\nonumber
\end{align}
The inequality in \eqref{eqn:governing_h_dot_augmented_system} can be equivalently reframed as $q(\bx, \bu) \leq 0$.
Standard CBF techniques are not readily applicable to systems in the form of \eqref{eqn:nonlinear_system} due to the integral control law governed by \eqref{eqn:u_governing_icbf}. The control strategy presented in \cite{ames2020integral_cbf} introduces an auxiliary input into the state space model. This ensures that if the inequality in \eqref{eqn:governing_h_dot_augmented_system} is not satisfied, we can ascertain the minimal alteration of the dynamic control rule to ensure safety. Thus, we adjust the augmented state space model \eqref{eqn:integrated_augmeted_system} to incorporate the input $\bv \in \mathbb{R}^m$ as:
\begin{align}
\left[\begin{array}{l}
\dot{\bx} \\
\dot{\bu}
\end{array}\right]=\left[\begin{array}{l}
F(\bx, \bu) \\
\phi(\bx, \bu)+\bv
\end{array}\right] .
\label{eqn:augmented_system_alteration}
\end{align}
\begin{figure}
    \centering
\includegraphics[width=\columnwidth]{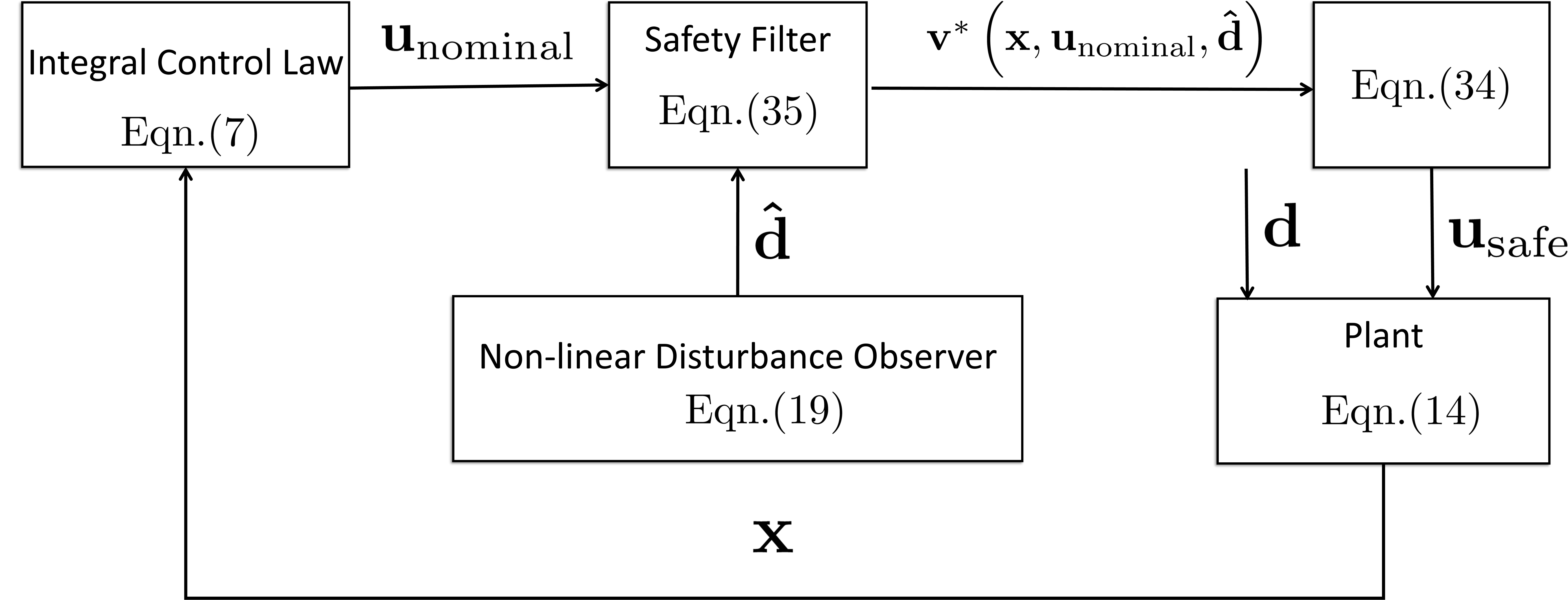}
    \caption{Framework for safe control synthesis of general non-control affine systems with additive bounded disturbances using High Order DO-ICBFs.}
    \label{fig:block_diagram}
\end{figure}
For each $\bz:=\left[\bx^{\mathrm{T}}, \bu^{\mathrm{T}}\right]^{\mathrm{T}} \in \mathbb{R}^n \times \mathbb{R}^m$ and $t \geq 0$, let us define the following set
\begin{align}
K_{\bz}:= & \left\{\bv \in \mathbb{R}^m: \frac{\partial h(\bx,\bu)}{\partial \bx} F(\bx, \bu)\right.\nonumber \\
& \left.+\frac{\partial h(\bx,\bu)}{\partial \bu}(\phi(\bx, \bu)+\bv)+\gamma(h(\bx, \bu)) \geq 0\right\} \\
= & \left\{\bv \in \mathbb{R}^m: p(\bx, \bu)^{\mathrm{T}} \bv \geq q(\bx, \bu)\right\} .\nonumber
\end{align}
Thus, when $ \bv(\bx) $ belongs to $ K_{\bz} $, we can infer that inequality \eqref{eqn:governing_h_dot_augmented_system} holds true. Furthermore, given that the dynamic control law $ \dot{\bu}=\phi(\bx, \bu) $ is intrinsically safe, the conditions $ q(\bx, \bu) \leq 0 $ and $ \bv=0 $ indicate system safety. 

\begin{definition}
\normalfont
 For system \eqref{eqn:nonlinear_system_no_dist} and the associated safe set $ \mathcal{S} \subset \mathbb{R}^n \times \mathbb{R}^m $, defined by the 0-superlevel set of a function $ h: \mathbb{R}^n \times \mathbb{R}^m \rightarrow \mathbb{R} $, i.e. we have $ \mathcal{S}=\{(\bx, \bu) \in \mathbb{R}^n \times \mathbb{R}^m: h(\bx, \bu) \geq 0\} $. The function $ h $ is termed an integral control barrier function (I-CBF) if, for all $ (\bx, \bu) $ in $ \mathbb{R}^n \times \mathbb{R}^m $ and $ t \geq 0 $:
\begin{align}
p(\bx, \bu)=0 \implies q(\bx, \bu) \leq 0 .
\end{align}
\end{definition}
\begin{theorem}
    \normalfont \cite{ames2020integral_cbf} 
Let us assume that a dynamic controller $ \dot{\bu}=\phi(\bx, \bu) $ for the control system \eqref{eqn:nonlinear_system_no_dist} exists. If the safe set $ \mathcal{S} \subset \mathbb{R}^n \times \mathbb{R}^m $ is characterized by an integral control barrier function, $ h: \mathbb{R}^n \times \mathbb{R}^m \rightarrow \mathbb{R} $, then altering the dynamic controller to:
\begin{align}
    \dot{\bu}=\phi(\bx, \bu)+\bv^\star(\bx, \bu)
    \label{eqn:dynamic_controller}
\end{align}
where $ \bv^\star $ is the solution to the QP:
$$
\begin{aligned}
\bv^\star(\bx, \bu) = \underset{\bv \in \mathbb{R}^m}{\operatorname{argmin}}\|\bv\|^2 \\
\text {subject to } p(\bx, \bu)^{\mathrm{T}} \bv \geq q(\bx, \bu)
\end{aligned}
$$
ensures safety, that is, the system \eqref{eqn:nonlinear_system_no_dist} combined with the dynamic controller \eqref{eqn:dynamic_controller} maintains $ \mathcal{S} $ as forward invariant. If $ (\bx_0, \bu_0) $ belongs to $ \mathcal{S} $, then $ (\bx, \bu) $ remains in $ \mathcal{S} $ for all $ t \geq 0 $.
\end{theorem}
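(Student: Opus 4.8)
The plan is to reduce the safety claim to an application of the invariance lemma (Lemma~\ref{lemma:invariant_time}) applied to the scalar function $t \mapsto h(\bz(t))$ along trajectories of the closed-loop augmented system \eqref{eqn:augmented_system_alteration} with $\bv = \bv^\star(\bx,\bu)$. The key point to establish is that, under the I-CBF hypothesis, the QP defining $\bv^\star$ is feasible at every $(\bx,\bu)$, so that $\bv^\star$ is well defined and the constraint $p(\bx,\bu)^{\mathrm{T}} \bv^\star \geq q(\bx,\bu)$ holds pointwise; translating this back through the definitions of $p$ and $q$ gives exactly $\dot h(\bz) + \gamma(h(\bz)) \geq 0$ along every closed-loop trajectory, and then Lemma~\ref{lemma:invariant_time} with $\alpha = \gamma$ (a class-$\mathcal{K}$ function) and $b(t) = h(\bz(t))$ yields $h(\bz(t)) \geq 0$ for all $t \geq 0$ whenever $h(\bz(0)) = h(\bx_0,\bu_0) \geq 0$, i.e. forward invariance of $\mathcal{S}$.

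First I would verify QP feasibility. Fix $(\bx,\bu)$. If $p(\bx,\bu) = \big(\partial h/\partial \bu\big)^{\mathrm{T}} \neq 0$, then the half-space $\{\bv : p^{\mathrm{T}}\bv \geq q\}$ is nonempty (one can take $\bv = q\, p/\|p\|^2$), so the QP has a solution. If instead $p(\bx,\bu) = 0$, the I-CBF property forces $q(\bx,\bu) \leq 0$, so the constraint reads $0 \geq q(\bx,\bu)$, which is satisfied by every $\bv$, and in particular by the minimizer $\bv^\star = 0$. Hence in all cases the feasible set is nonempty, closed, and convex, and minimizing the strictly convex $\|\bv\|^2$ over it gives a unique $\bv^\star(\bx,\bu)$ satisfying $p(\bx,\bu)^{\mathrm{T}}\bv^\star(\bx,\bu) \geq q(\bx,\bu)$.

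Next I would unwind the definitions: by construction of $p$ and $q$, the inequality $p^{\mathrm{T}}\bv^\star \geq q$ is equivalent to
\begin{align}
\frac{\partial h(\bx,\bu)}{\partial \bx} F(\bx,\bu) + \frac{\partial h(\bx,\bu)}{\partial \bu}\big(\phi(\bx,\bu) + \bv^\star(\bx,\bu)\big) + \gamma(h(\bx,\bu)) \geq 0,\nonumber
\end{align}
and the left-hand side (minus the $\gamma$ term) is precisely $\dot h(\bz)$ evaluated along the modified augmented dynamics \eqref{eqn:augmented_system_alteration}. Thus along any closed-loop trajectory $\bz(t) = [\bx(t)^{\mathrm{T}}, \bu(t)^{\mathrm{T}}]^{\mathrm{T}}$ we have $\dot h(\bz(t)) + \gamma(h(\bz(t))) \geq 0$ for all $t$ in the interval of existence. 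Applying Lemma~\ref{lemma:invariant_time} with $b(t) := h(\bz(t))$ and $\alpha := \gamma$, together with $b(0) = h(\bx_0,\bu_0) \geq 0$, gives $h(\bz(t)) \geq 0$ for all $t \geq 0$; equivalently $(\bx(t),\bu(t)) \in \mathcal{S}$ for all $t \geq 0$, which is the asserted forward invariance.

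The main obstacle — and the point that in a fully rigorous treatment needs care — is the regularity of $\bv^\star(\bx,\bu)$: the invariance lemma requires $h(\bz(t))$ to be continuously differentiable, which in turn needs the closed-loop vector field $[F(\bx,\bu)^{\mathrm{T}},\ (\phi(\bx,\bu)+\bv^\star(\bx,\bu))^{\mathrm{T}}]^{\mathrm{T}}$ to be (at least locally Lipschitz to guarantee existence and uniqueness of solutions, and) regular enough that trajectories are well defined. Since $\bv^\star$ is the solution map of a parametric QP, it is continuous wherever $p(\bx,\bu) \neq 0$ (indeed locally Lipschitz there, being $q\,p/\|p\|^2$ clipped to the active/inactive regions), but can lose regularity where $p(\bx,\bu) = 0$; one either invokes the standing assumption that this solution map is locally Lipschitz (as in \cite{ames2020integral_cbf}), or restricts attention to the region where the I-CBF is ``strict'' so that $p \neq 0$ on the relevant boundary. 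Modulo this regularity caveat, the remaining steps are the routine feasibility argument and the direct invocation of Lemma~\ref{lemma:invariant_time} described above.
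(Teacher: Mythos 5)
Your proof is correct and follows the standard argument: feasibility of the QP from the I-CBF condition (the case $p(\bx,\bu)=0$ being exactly where the implication $q\leq 0$ is needed, with $\bv^\star=0$ feasible), translation of the constraint into $\dot h(\bz)+\gamma(h(\bz))\geq 0$ along the modified augmented dynamics, and then Lemma~\ref{lemma:invariant_time} to conclude forward invariance of $\mathcal{S}$, with an appropriate caveat about Lipschitz regularity of the QP solution map $\bv^\star$. The paper itself states this theorem without proof (it is quoted from \cite{ames2020integral_cbf}), and your argument matches both the original one and the structure of the paper's own proof of the analogous DO-ICBF theorem, so there is nothing to flag.
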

% Guaranteeting safety in presence of additive unknown external disturbances still remains an open challenge. One straightforward method to address this case is to consider the worst case disturbance. However this approach can be conservative. Towards this aim, we adopt a disturbance observer based approach to address this problem. This would be illustrated in detail in the following sections.
Throughout the rest of the paper, we consider the general nonlinear system with bounded additive disturbances as follows:
\begin{align}
\dot{\bx}=F(\bx,\bu)+\ell(\bx)\bd,\quad \bx(0)=\bx_0
    \label{eqn:nonlinear_system}
\end{align}
where $\bx\in\mathbb{R}^n$ is the state, $\bu\in\mathbb{R}^m$ is the control input, $d\in\mathcal{D}\subset\mathbb{R}^p$, $\ell:\mathbb{R}^n\rightarrow\mathbb{R}^n\times\mathbb{R}^p$ and $F:\mathbb{R}^n\times\mathbb{R}^m\rightarrow\mathbb{R}^n$ are continuously differentiable functions. For the sake of brevity, we drop the time indexing of variables $\bx$, $\bu$ and $\bd$ and use them whenever necessary.
\begin{assumption}
    \normalfont The norms of both the disturbance $\bd$ and its rate of change $\dbd$ are confined by known positive constants, that is, for all $t\geq 0$, $\|\bd\| \leq k_0$ and $\|\dbd\| \leq k_1$ where $k_0>0$ and $k_1>0$. 
\label{assumption:bounded_disturbances}
\end{assumption}
% \begin{assumption}
\section{Problem Statement\label{sec:problem_statement}}
In this section, we present the problem we address in this paper.
\begin{problem}
    \normalfont Under Assumption \ref{assumption:bounded_disturbances} and given the safe set $\mathcal{S}$ and the nonlinear system \eqref{eqn:nonlinear_system}, for a given estimate of the disturbance $\hbd$, design an integral feedback control law $\bu(\bx,\hbd)$ governed by \eqref{eqn:u_governing_icbf} so that the closed-loop system \eqref{eqn:nonlinear_system} under $\bu(\bx,\hbd)$ is guaranteed to be safe.
\end{problem}
\begin{remark}
\normalfont Guaranteeing safety in the presence of unknown additive unknown bounded disturbances for non-control affine systems still remains an open challenge. One straightforward method to address this case is to consider the worst-case disturbance and apply ICBF. However, this approach can be conservative. Towards this aim, we adopt a disturbance observer based approach to address this problem. This will be illustrated in detail in the following sections. 
% In addition, it might be the case that the condition might not be satisfied or $h(\bx,\bu)$ would only be a function of $\bx$. In that case, $p(\bx,\bu)$ would always be zero. We address this limitation in Section \ref{subsec:high_order_do_icbfs}.
\end{remark}
\section{Main Results\label{sec:main_result}}
Consider the output equation given by $\by=\boldsymbol{n}(\bx)$, with $\boldsymbol{n}: \mathbb{R}^n \to \mathbb{R}^m$ being a smooth function. The function $\by$ depends on the initial condition $\bx_0$ and prior control inputs represented as $\bu(\tau)$ for $ \tau \in[0, t)$. The integral control law in \cite{ames2020integral_cbf} is based on the tracking control of Newton-Raphson law which is described as follows.
% But, when $t \geq 0$ and $T > 0$, both $x(t+T)$ and $y(t+T)$ are influenced by $u(\tau), \tau \in[t, t+T]$. 
Now, for any moment $t \geq 0$ with $T > 0$, if we maintain $\bu_t = \bu(t)$ constant over the interval $[t, t+T]$ and carry out a forward integration of \eqref{eqn:nonlinear_system} over this period starting with $\hat{\bx}(t) = \bx(t)$, it provides a forecast of the state $\hat{\bx}(t+T)$. This in turn results in an output prediction:
\begin{align}
    \hat{\by}(t+T) = \zeta(\hat{\bx}(t+T)) =: g(\bx(t), \bu(t))
\end{align}
 Consequently the control input $\bu(t)$ is given by
 % In such scenarios, an adapted version of equation (9) can be expressed as:
\begin{align}
\dot{\bu}(t) = \alpha\left(\frac{\partial g}{\partial \bu}(\bx, \bu)^{-1}(\by_{\text{ref}} - \hat{\by}(t+T))\right)
\label{eqn:tracking_control_law}
\end{align}
% Consider the output function $\by=\boldsymbol{n}(\bx,\bu)$ where $\by:\mathbb{R}^n\times \mathbb{R}^m\rightarrow\mathbb{R}^m$, the integral control law in \cite{ames2020integral_cbf} is based on the tracking control of Newton-Raphson law and is given by
% \begin{align}
% \phi(\bx,\bu)=\left(\frac{\partial \boldsymbol{n}}{\partial \bu}\right)^{-1}(\by-\by_{\text{ref}})
% \label{eqn:tracking_control_law}
% \end{align}
where $\by_\text{ref}$ is the reference signal. 
However, note that for the $\bu$ to be implementable in real world systems, the inverse of $\frac{\partial g}{\partial \bu}$ must always exist which might not always be the case. To address this limitation, once can consider the following control law based on the Proportional Integral law given as follows:
\begin{align}
    \bu(\by)=K_p(\by-\by_{\text{ref}})+K_I\int_0^t(\by-\by_{\text{ref}})\mathrm{dt}
    \label{eqn:pi_control_law}
\end{align}
Consequently, the integral control law is given by
\begin{align}
        \phi(\bx,\bu)&=\left[K_p\left(\frac{\partial \boldsymbol{n}}{\partial \bx}F(\bx,\bu)-\dot{\by}_{\text{ref}}\right)+K_I(\by-\by_{\text{ref}})\right]
        \label{eqn:pi_dot_control_law}
\end{align}
In contrast to the control law \eqref{eqn:tracking_control_law}, the PI based control law \eqref{eqn:pi_control_law} would be devoid of any singularity. However, note that the PI based controller would not be always the most appropriate controller especially if the system is highly nonlinear. This nonlinearity can cause challenges for a PI controller, which is essentially a linear control strategy. While PI controllers can handle certain nonlinearities, general non-affine systems often can have complex behaviors that a simple PI controller cannot address. Other advanced control methods, such as backstepping or Nonlinear Model Predictive Control (MPC), can be leveraged specifically to address challenges in non-affine systems.
\subsection{Disturbance Observer\label{subsec:nonlinear_disturbance_observer}}
We establish the nonlinear Disturbance Observer (DOB) for the general nonlinear system \eqref{eqn:nonlinear_system} that will play a key role in the safe control design that will be presented in Section \ref{subsec:do-icbf}. For system \eqref{eqn:nonlinear_system}, we propose the DOB as follows:
\begin{align}
\left\{\begin{array}{l}
\hbd=\br+\beta \boldsymbol{q}, \\
\dot{\br}=-\beta L_d\left(F(\bx,\bu)+\ell(\bx) \hbd\right),
\end{array}\right.
\label{eqn:dob_governing_dist}
\end{align}
where $\beta>0$ is a positive tuning parameter. Here, $\hbd$ represents the estimated disturbance, while $L_d(\bx)$ is the observer gain ensuring $-\bx^{\mathrm{T}} L_d \ell(\bx) \bx \leq-\bx^{\mathrm{T}} \bx$ for any $\bx$ (for instance, $L_d=$ $-\left(\ell(\bx)^{\mathrm{T}} \ell(\bx)\right)^{-1} \ell(\bx)^{\mathrm{T}}$ if $\ell(\bx)$ maintains a full column rank). $q(\bx)$ is a function satisfying $\frac{\partial \boldsymbol{q}}{\partial \bx}=L_d(\bx)$. The computation of $q(\bx)$ and $L_d(\bx)$ is in general non-trivial and depends on the specific problem \cite{sariyildiz2019disturbance_dist_1,mohammadi2017nonlinear_dist_2,chen2015disturbance_dist_3}. 
\subsection{Disturbance Observer based Integral Control Barrier Functions (DO-ICBFs)\label{subsec:do-icbf}}
Incorporating the disturbance observer in \eqref{eqn:integrated_augmeted_system}, we now consider the augmented system as follows:
\begin{align}
    \left[\begin{array}{l}
\dot{\bx} \\
\dot{\bu}\\
\dot{\br}
\end{array}\right]=\left[\begin{array}{l}
F(\bx, \bu)+\ell(\bx)\bd \\
\phi(\bx, \bu)+\bv\\
-\beta L_d\left(F(\bx,\bu)+\ell(\bx) \hbd\right)
\end{array}\right] 
\label{eqn:augmented_system_with_d_disturbance}
\end{align}
where $\hbd=\br+\beta \boldsymbol{q}$. Denote the augmented state by $\bz^d:=[\bx^\mathrm{T},\;\bu^\mathrm{T},\;\br^\mathrm{T}]^\mathrm{T}$. Further define $w(\bx,\bu,\hbd)$ as follows:
\begin{align}
   & w(\bx, \bu,\hbd): \nonumber\\
    &=-\left(\frac{\partial h(\bx,\bu)}{\partial \bx} F(\bx, \bu)+\frac{\partial h(\bx,\bu)}{\partial \bx} \ell(\bx)\hbd\right.+\nonumber
    \\&\quad\quad\frac{\partial h(\bx,\bu)}{\partial \bu} \phi(\bx, \bu) +\gamma(h(\bx, \bu))),
\end{align}

\begin{definition}
\normalfont (DO-ICBFs)
 Under Assumption 1, for system \eqref{eqn:nonlinear_system} and the associated safe set $ \mathcal{S} \subset \mathbb{R}^n \times \mathbb{R}^m $, defined by the 0-superlevel set of a function $ h: \mathbb{R}^n \times \mathbb{R}^m \rightarrow \mathbb{R} $. The function $ h $ is termed a Disturbance Observer based Integral Control Barrier function (DO-ICBF) if, for all $ (\bx, \bu) $ in $ \mathbb{R}^n \times \mathbb{R}^m $ and $ t \geq 0 $:
\begin{align}
p(\bx, \bu)=0 \implies w(\bx, \bu,\hbd) \leq -c(\bx,\bu,h,t) ,
\label{eqn:doicbf_main_condition}
\end{align}
where $c(\bx,\bu,h,t)>0$ for all $t\geq 0$ and which is given by
\begin{align}
    &c(\bx,\bu,h,t)=\nonumber\\
    &\underbrace{\left\|\frac{\partial h(\bx,\bu)}{\partial \bx}\ell(\bx)\right\|}_{c_1(\bx,\bu,h)}\underbrace{\sqrt{\frac{2 \mu_1 \lambda\left\|\ed(0)\right\|^2 e^{-2 \lambda t}+k_1^2\left(1-e^{-2 \lambda t}\right)}{2 \mu_1 \lambda}} }_{c_2(t)}
\end{align}
for some positive constants $\lambda\triangleq \beta-\frac{\mu_1}{2}$, $0<\mu_1<2\beta$ and $k_1>0$.
\end{definition}
\begin{theorem}
    \normalfont 
Under Assumption \ref{assumption:bounded_disturbances}, for the control system \eqref{eqn:nonlinear_system}, assume a dynamic controller exists as $ \dot{\bu}=\phi(\bx, \bu) $. If the safe set $ \mathcal{S} \subset \mathbb{R}^n \times \mathbb{R}^m $ is characterized by a DO-ICBF, $ h: \mathbb{R}^n \times \mathbb{R}^m \rightarrow \mathbb{R} $, then modifying the dynamic controller to:
$$
\dot{\bu}=\phi(\bx, \bu)+\bv^\star(\bx, \bu,\hbd)
$$
where $ \bv^\star $ is the solution to the QP:
\begin{subequations}
\begin{align}
&\bv^\star(\bx, \bu,\hbd) = \underset{\bv \in \mathbb{R}^m}{\operatorname{argmin}}\|\bv\|^2 \\
\text {subject to } &p(\bx, \bu)^{\mathrm{T}} \bv \geq w(\bx, \bu,\hbd)+c(\bx, \bu,t)
\end{align}
\label{eqn:qp_for_do_icbf}
\end{subequations}
ensures safety. In other words, the system \eqref{eqn:augmented_system_with_d_disturbance} combined with $\bv^\star$ obtained from \eqref{eqn:qp_for_do_icbf} maintains $ \mathcal{S} $ as forward invariant, that is, if $ (\bx_0, \bu_0) $ belongs to $ \mathcal{S} $, then $ (\bx, \bu) $ remains in $ \mathcal{S} $ for all $ t \geq 0 $.
\end{theorem}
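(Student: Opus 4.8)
The plan is to establish two facts and then invoke the invariance Lemma~\ref{lemma:invariant_time}. The first fact is that the disturbance-estimation error $\ed := \bd - \hbd$ satisfies a bound of the form $\|\ed(t)\| \le c_2(t)$, where $c_2(t)$ is precisely the second factor appearing in the definition of $c(\bx,\bu,h,t)$. The second fact is that, given a DO-ICBF $h$, the QP \eqref{eqn:qp_for_do_icbf} is always feasible and the resulting $\bv^\star$ is locally Lipschitz, so that the closed-loop system is well-posed and $h$ satisfies the infinitesimal decrease condition along trajectories.

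\emph{Step 1: error dynamics of the observer.} Differentiating $\hbd = \br + \beta\boldsymbol{q}$ and using $\dot{\br} = -\beta L_d(F(\bx,\bu)+\ell(\bx)\hbd)$ together with $\tfrac{\partial \boldsymbol{q}}{\partial\bx} = L_d(\bx)$ and $\dot{\bx} = F(\bx,\bu)+\ell(\bx)\bd$, I get $\dot{\hbd} = -\beta L_d\ell(\bx)(\hbd - \bd) = \beta L_d\ell(\bx)\,\ed$. Hence $\dot{\ed} = \dbd - \dot{\hbd} = \dbd - \beta L_d\ell(\bx)\,\ed$. Taking $V = \tfrac12\|\ed\|^2$ and using the defining property $-\bx^{\mathrm T}L_d\ell(\bx)\bx \le -\bx^{\mathrm T}\bx$ of the observer gain, I obtain $\dot V = \ed^{\mathrm T}\dbd - \beta\,\ed^{\mathrm T}L_d\ell(\bx)\ed \le -\beta\|\ed\|^2 + \|\ed\|\,\|\dbd\|$. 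Applying Young's inequality $\|\ed\|\|\dbd\| \le \tfrac{\mu_1}{2}\|\ed\|^2 + \tfrac{1}{2\mu_1}\|\dbd\|^2$ and Assumption~\ref{assumption:bounded_disturbances} ($\|\dbd\|\le k_1$) yields $\dot V \le -2\lambda V + \tfrac{k_1^2}{2\mu_1}$ with $\lambda = \beta - \tfrac{\mu_1}{2} > 0$. The comparison lemma then gives $V(t) \le V(0)e^{-2\lambda t} + \tfrac{k_1^2}{4\mu_1\lambda}(1-e^{-2\lambda t})$, i.e. $\|\ed(t)\|^2 \le \|\ed(0)\|^2 e^{-2\lambda t} + \tfrac{k_1^2}{2\mu_1\lambda}(1-e^{-2\lambda t}) \cdot \tfrac{1}{1}$; rearranging over the common denominator $2\mu_1\lambda$ reproduces exactly $\|\ed(t)\| \le c_2(t)$.

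\emph{Step 2: feasibility of the QP and the decrease condition.} By the DO-ICBF property \eqref{eqn:doicbf_main_condition}, whenever $p(\bx,\bu)=0$ the constraint in \eqref{eqn:qp_for_do_icbf} reads $0 \ge w(\bx,\bu,\hbd)+c(\bx,\bu,t)$, which holds by hypothesis; when $p(\bx,\bu)\neq 0$ any sufficiently large multiple of $p$ satisfies it, so the QP is feasible everywhere, and its unique minimizer admits the closed form $\bv^\star = \max\{0,\,(w+c-p^{\mathrm T}\cdot 0)/\|p\|^2\}\,p$ (with $\bv^\star=0$ when $p=0$), which is locally Lipschitz under the stated smoothness — this is the standard argument from \cite{ames2020integral_cbf}. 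With $\bv = \bv^\star$ the constraint gives $p(\bx,\bu)^{\mathrm T}\bv^\star \ge w(\bx,\bu,\hbd)+c$. Now compute $\dot h$ along \eqref{eqn:augmented_system_with_d_disturbance}: $\dot h = \tfrac{\partial h}{\partial\bx}(F+\ell\bd) + \tfrac{\partial h}{\partial\bu}(\phi+\bv^\star)$. Writing $\bd = \hbd + \ed$ and recalling the definitions of $w$ and $p$, this equals $-w(\bx,\bu,\hbd) - \gamma(h) + \tfrac{\partial h}{\partial\bx}\ell(\bx)\ed + p(\bx,\bu)^{\mathrm T}\bv^\star$. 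Using the QP constraint on $p^{\mathrm T}\bv^\star$ and then Cauchy--Schwarz with Step 1, $\tfrac{\partial h}{\partial\bx}\ell(\bx)\ed \ge -c_1(\bx,\bu,h)\,c_2(t) = -c(\bx,\bu,t)$, so $\dot h \ge -w - \gamma(h) + (w + c) - c = -\gamma(h)$, i.e. $\dot h(\bz^d) \ge -\gamma(h(\bx,\bu))$ along all trajectories.

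\emph{Step 3: conclusion.} Since $h$ is continuously differentiable and the closed-loop vector field is locally Lipschitz (so trajectories exist and are unique), the scalar function $b(t) := h(\bx(t),\bu(t))$ is continuously differentiable and satisfies $\dot b(t) \ge -\gamma(b(t))$; taking $\alpha = -\gamma$... more carefully, $\dot b \ge -\gamma(b)$ is exactly the hypothesis of Lemma~\ref{lemma:invariant_time} in the form used for CBFs (with the class-$\mathcal{K}$ comparison function $\gamma$), so $b(0) \ge 0$ implies $b(t)\ge 0$ for all $t$, i.e. $(\bx(t),\bu(t)) \in \mathcal{S}$ for all $t \ge 0$.

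\emph{Main obstacle.} The routine parts are Steps 2 and 3 (they mirror \cite{ames2020integral_cbf} almost verbatim, with the extra additive term $c$ absorbing the disturbance mismatch). The crux is Step 1: getting the error bound $\|\ed(t)\| \le c_2(t)$ with constants that match the definition of $c$ exactly, which requires the correct choice of Young's-inequality parameter $\mu_1$ and careful bookkeeping of the $e^{-2\lambda t}$ terms; one also needs $\ed(0)$ to be a legitimate known quantity (or at least bounded), which follows from $\|\bd\|\le k_0$ but should be noted. A secondary subtlety worth flagging is that $c(\bx,\bu,t)$ depends on the time-varying bound $\|\ed(0)\|e^{-2\lambda t}$, so $\bv^\star$ is time-dependent; this does not affect forward invariance but should be acknowledged when invoking Lipschitzness for existence of solutions.
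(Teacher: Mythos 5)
Your proposal is correct and follows essentially the same route as the paper: a Lyapunov analysis of the observer error giving the bound $\|\ed(t)\|\le c_2(t)$, substitution of the QP constraint into $\dot h$ along \eqref{eqn:augmented_system_with_d_disturbance} with Cauchy--Schwarz absorbing the estimation error into $c(\bx,\bu,h,t)$, and then the invariance lemma. In fact your Steps 2--3 spell out the substitution of $\bv^\star$ and the comparison-lemma bookkeeping more explicitly than the paper's own terse argument, but the underlying idea is identical.
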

\begin{proof}
If condition \eqref{eqn:doicbf_main_condition} is satisfied, the QP is feasible. Furthermore, the DO-ICBF condition $\dot{h}(\bz^d)+\gamma(h(\bz^d))\geq 0$ turns out to be as follows i.e. $h(\bx,\bu)$ along system trajectories \eqref{eqn:augmented_system_with_d_disturbance} (with $\bv=0$) is given by
\begin{align}
&\dot{h}(\bz^d)+\gamma(h(\bz^d))\nonumber\\
& =\left(\frac{\partial h(\bx,\bu)}{\partial \bx} F(\bx, \bu)+\frac{\partial h(\bx,\bu)}{\partial \bx} \ell(\bx){d}\right.+\nonumber
    \\&\quad\quad\frac{\partial h(\bx,\bu)}{\partial \bu} \phi(\bx, \bu) +\gamma(h(\bx, \bu)))\geq 0\nonumber\\
    &=\left(\frac{\partial h(\bx,\bu)}{\partial \bx} F(\bx, \bu)+\frac{\partial h(\bx,\bu)}{\partial \bx} \ell(\bx)\hbd\right.+\nonumber
    \\&\quad\quad\frac{\partial h(\bx,\bu)}{\partial \bu} \phi(\bx, \bu) +\gamma(h(\bx, \bu))) \nonumber\\
    &\geq\quad\frac{\partial h(\bx,\bu)}{\partial \bx} \ell(\bx)(\hbd-\bd)\geq \left\|\frac{\partial h(\bx,\bu)}{\partial \bx} \ell(\bx)(\hbd-\bd)\right\|\nonumber
\end{align}
Denote the error in disturbance estimation $\ed$ as
\begin{align}
    \ed=\hbd-\bd
\label{eqn:error_definition}
\end{align}
The derivative of this error $\dot{\boldsymbol{e}}_d$ can be written as
\begin{align}
    \dot{e}_d&=\dot{\hat{\boldsymbol{d}}}-\dbd=\dot{\br}+\beta \frac{\partial q}{\partial \bx} \dot{\bx}-\dbd=\dot{\br}+\beta L_d \dot{\bx}-\dbd\nonumber
\end{align}
 Now, consider a candidate Lyapunov function $V_1=\frac{1}{2}\left\|\ed\right\|^2$. Using the definition of $L_d$, the expression for $\dot{V}_1$ is given by
\begin{align}
\dot{V}_1&=\boldsymbol{e}^\mathrm{T}_d\dot{\boldsymbol{e}}_d\nonumber\\
&=(\hbd-\bd)^\mathrm{T}(\dot{\br}+\beta L_d(F(\bx,\bu)+\ell(\bx)\bd)-\dbd)\nonumber
\end{align}
 By invoking Assumption \ref{assumption:bounded_disturbances} using the expression of $\dot{\br}$ , we have
\begin{align}
 \dot{V}_1 &=(\hbd-\bd)^\mathrm{T}(-\beta L_d\ell(\bx))(\hbd-\bd)-(\hbd-\bd)^\mathrm{T}\dbd\nonumber\\
&\leq -\beta \|\ed\|^2-(\hbd-\bd)^\mathrm{T}\dbd\nonumber\\
&=-\left(\beta-\frac{\mu_1}{2}\right)\|\ed\|^2-\mu_1\|\ed\|^2-(\hbd-\bd)^\mathrm{T}\dbd\nonumber  
\end{align}
where $\mu_1$ is a constant within the range $0<\mu_1<2 \beta$. Using the fact that $k_1\left\|\ed\right\| \leq \frac{\mu_1}{2}\left\|\ed\right\|^2+\frac{1}{2 \mu_1} k_1^2$, we have,
\begin{align}
\dot{V}_1 \leq-2 \lambda V_1+\frac{k_1^2}{2 \mu_1}  
\label{eqn:V_dot_upper_bound}
\end{align}
where $\lambda$ is defined as $\lambda \triangleq \beta-\frac{\mu_1}{2}$.  Consequently, an upper bound for $\|\ed\|$ is given by
\begin{align}
    \left\|\ed\right\| \leq \sqrt{\frac{2 \mu_1 \lambda\left\|\ed(0)\right\|^2 e^{-2 \lambda t}+k_1^2\left(1-e^{-2 \lambda t}\right)}{2 \mu_1 \lambda}} 
    \label{eqn:error_dist_uub}
\end{align}
% Clearly from \eqref{eqn:error_dist_uub}, the error in disturbance estimation $\ed$ is Uniformly Ultimately Bounded (UUB). 
Consequently, the DO-ICBF condition $\dot{h}(\bz^d)+\gamma(h(\bz^d))\geq 0$ becomes
\begin{align}
    w(\bx, \bu,\hbd) \leq -c(\bx,\bu,h,t)\nonumber
\end{align}
Hence the theorem follows.
\end{proof}
It can be easily verified that the optimal solution for \eqref{eqn:qp_for_do_icbf} via the KKT conditions is given by
\begin{align}
v^\star(\bx, \bu, \hbd)= \begin{cases}\frac{f(\bx, \bu,\hbd,t)}{\|p(\bx, \bu)\|^2} p(\bx, \bu) & \text { if } f(\bx, \bu,\hbd,t)  >0 \\ 0 & \text { if }f(\bx, \bu,\hbd,t) \leq 0\end{cases}\nonumber
\end{align}
where $f(\bx, \bu,\hbd,t)=w(\bx, \bu,\hbd) +c(\bx,\bu,h,t)$.
\begin{remark}
    \normalfont Note that $\underset{t\rightarrow \infty}{\lim}\;c(\bx,\bu,h,t)=\frac{c_1(\bx)k_1}{\sqrt{2\mu_1\lambda}}$. This means that if the disturbance observer is given enough time, the estimation error will become uniformly ultimately bounded.
\end{remark}
\begin{remark}
    \normalfont In many real-world applications, the state and input constraints are usually encoded via separate scalar functions. For instance $h_1(\bx,\bu,\hbd)=a_1^2-\|\bx\|^2$ represents constraining the states inside the ball of radius $a_1$ and $h_2(\bx,\bu,\hbd)=a_2^2-\|\bx\|^2$ represents constraining the ball constraint for the control inputs. Furthermore, it might be difficult to come up with a single scalar function $h(\bx,\bu)$ that could encode both the state and input constraints. In that case, $h_1(\bx,\bu)$ might not be a valid DO-ICBF. Towards this aim, \cite{ames2020integral_cbf} proposes to use $h_e(\bx,\bu)=\dot{h}_x(\bx)+\alpha(h_x(\bx))$ as a ICBF. However it might be the case that $\frac{\partial h_e(\bx,\bu)}{\partial \bu}=0$ does not imply that $\frac{\partial h_e(\bx,\bu)}{\partial \bx}F(\bx,\bu)+\alpha(h_e(\bx,\bu))\geq 0$ for some $(\bx,\bu)\in\mathbb{R}^n\times\mathbb{R}^m$.  To address this drawback, we propose a general framework termed High Order DO-ICBFs. This is illustrated by considering the following simple example.
\end{remark}
\begin{example}
    \normalfont Consider the following example where $d=0$
    \begin{align}
        \dot{x}=x-u^2\nonumber
    \end{align}
    The state constraint set (or safe set) $\mathcal{X}=\{x:x\leq 4\}$ and the input constraint $\mathcal{U}=[-1,\;1]$ can be encoded by functions $h_x(x,u)$ and $h_u(x,u)$ as follows:
    \begin{align}
        h_x(x,u)=4-x,\quad\quad h_u(x,u)=1-u^2\nonumber
    \end{align}
    In this case, the DO-ICBF conditions give:
    \begin{align}
        \begin{bmatrix}
            \frac{\partial h_x(x,u}{\partial u}\\
            \frac{\partial h_u(x,u)}{\partial u}
        \end{bmatrix}=\begin{bmatrix}
            0\\
            -2u
        \end{bmatrix}=0\implies \dot{h}_x(x,u=0)=
            -x   .         \nonumber
    \end{align}
    Therefore, any trajectory starting from $x=4$ would go out of the safe set $\mathcal{X}$. Consequently, the $h_x(x,u)$ and $h_u(x,u)$ do not represent a valid ICBF or a valid DO-ICBF.
\end{example}
\subsection{High Order DO-ICBFs\label{subsec:high_order_do_icbfs}}
For the safe set $\mathcal{S}$ defined over the joint space of state and the input, it might be the case that
\begin{align}
p(\bx,\bu)=0\centernot\implies w(\bx, \bu,\hbd) \leq -c(\bx,\bu,h,t) 
\end{align}
% \begin{figure*}[ht]
%  \captionsetup[subfigure]{justification=centering}
%  \centering
%  \begin{subfigure}{0.32\textwidth}
% {\includegraphics[scale=0.32]{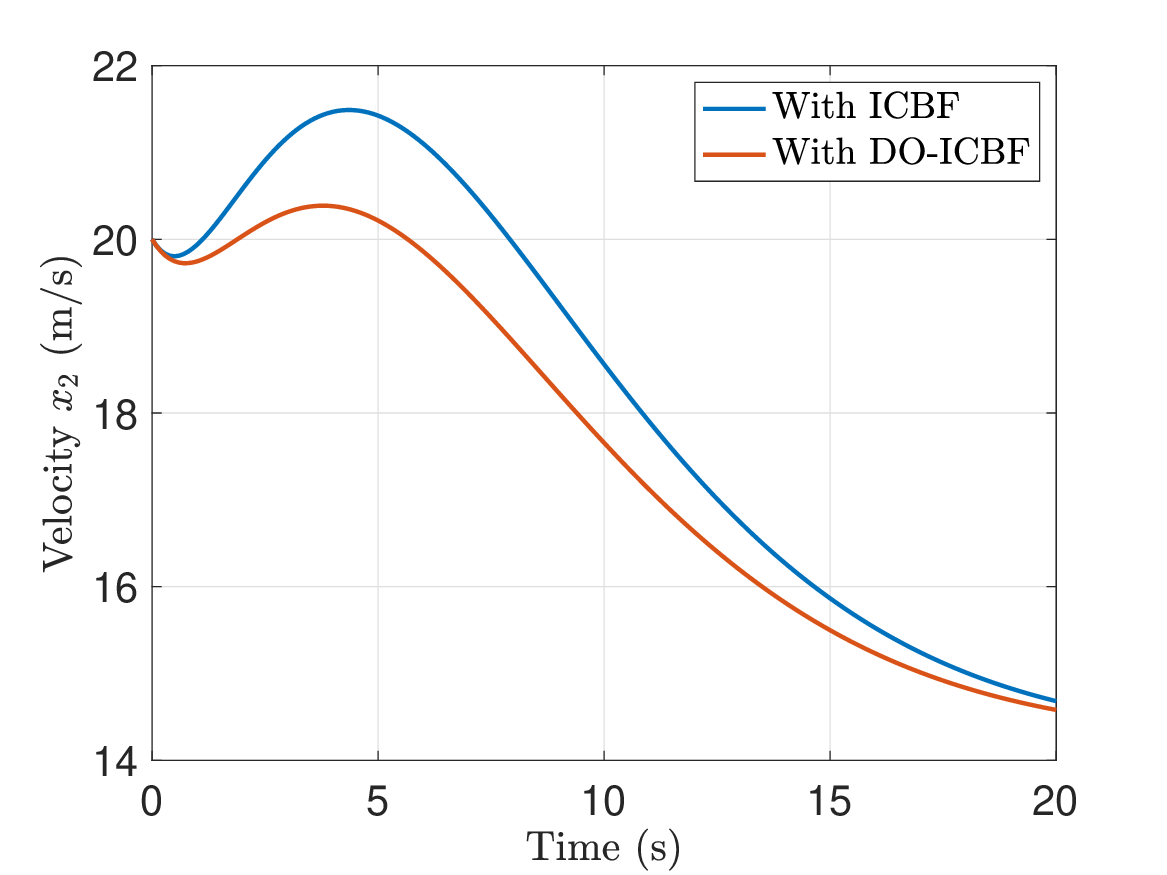}}
% \caption{Velocity vs time}
% \label{fig:}
%  \end{subfigure}
% \label{fig:}
%  \begin{subfigure}{0.32\textwidth}
% {\includegraphics[scale=0.32]{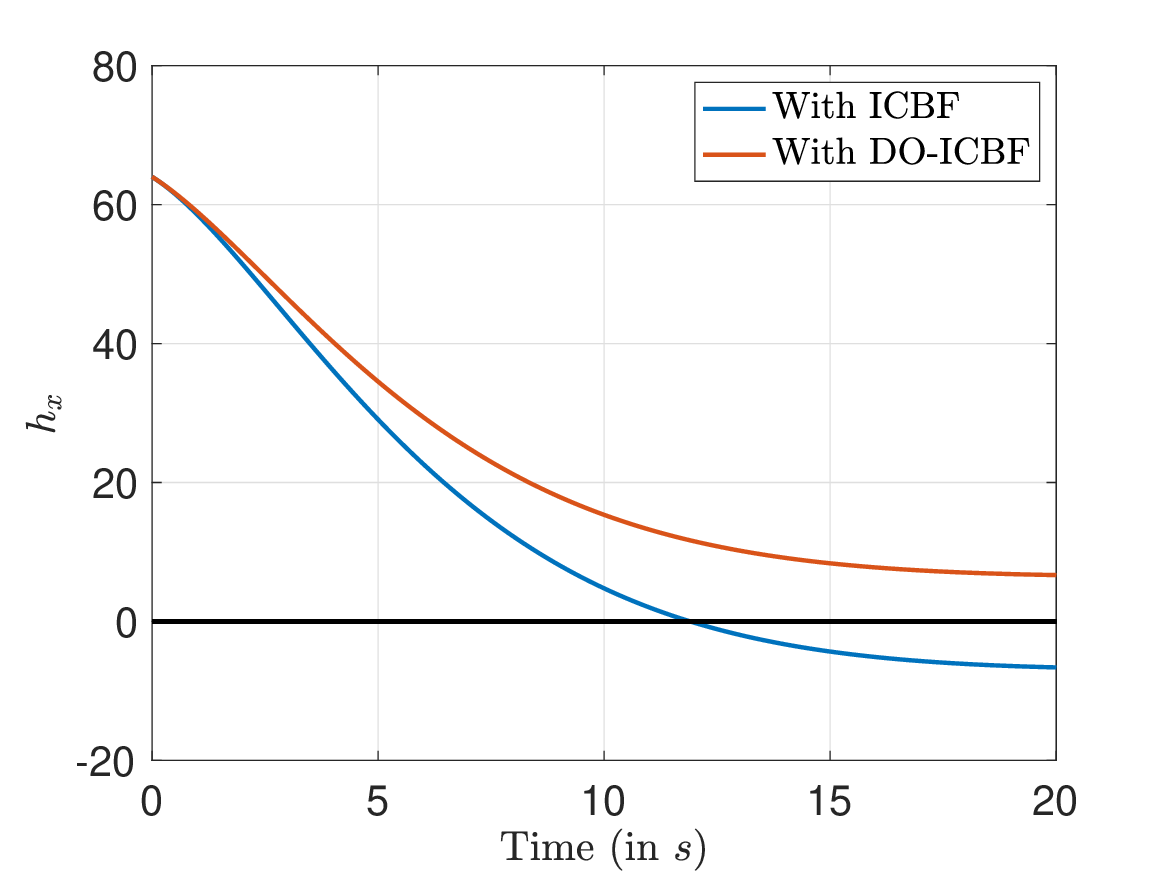}}
%  \caption{$h_x$ versus time}
% \label{fig:}
%  \end{subfigure}
% \label{fig:}
%  \begin{subfigure}{0.32\textwidth}
% {\includegraphics[scale=0.32]{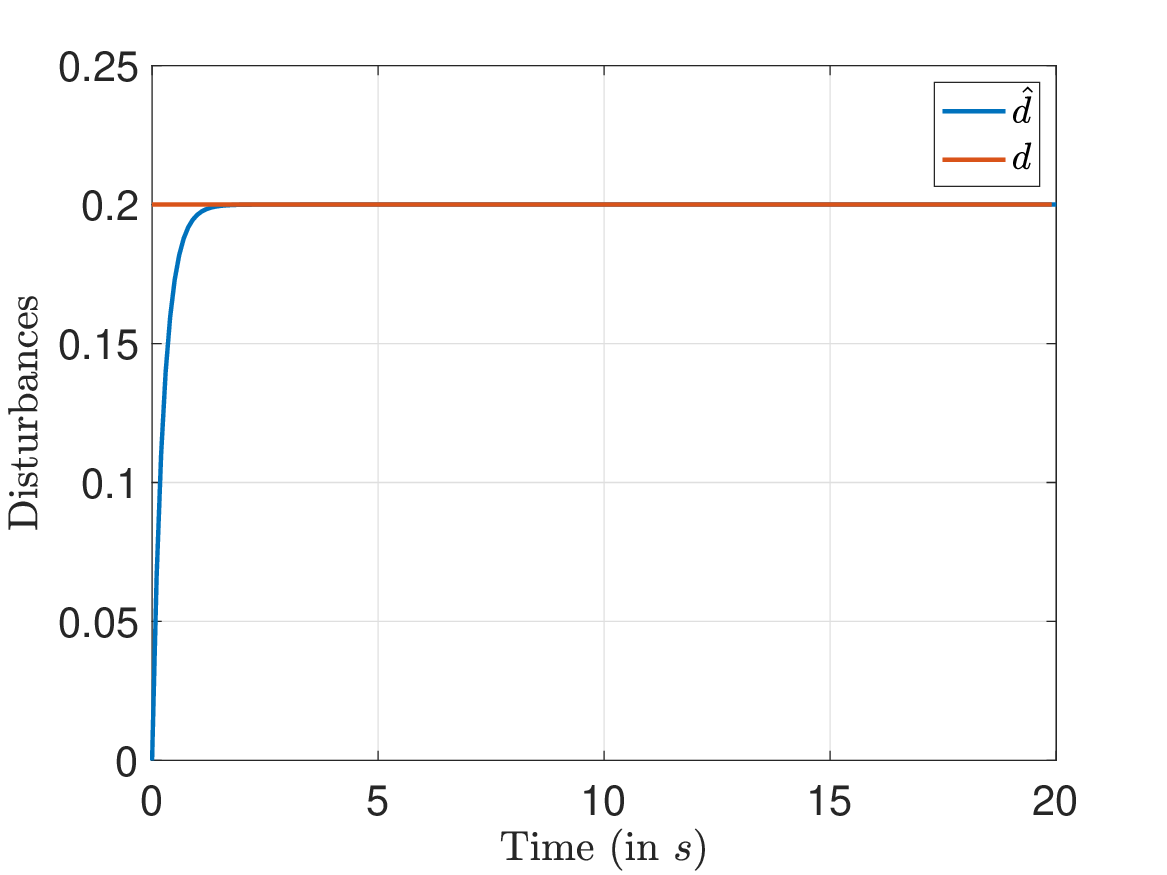}}
% \caption{Evolution of estimated and real disturbance with time}
% \label{fig:}
% \end{subfigure}
%  \caption{}
% \label{fig:}
% \end{figure*}

\begin{figure*}[h!]


\centering
\begin{subfigure}[t]{0.32\linewidth}
    \centering
    \includegraphics[width=1\linewidth]{velocity.eps}
\caption{Velocity vs time}
    \label{fig:velocity_acc}
\end{subfigure}
\begin{subfigure}[t]{0.32\linewidth}
    \centering
    \includegraphics[width=1\linewidth]{h_x_safety.eps}
 \caption{$h_x$ versus time}
    \label{fig:safety_acc}       
\end{subfigure}
\begin{subfigure}[t]{0.32\linewidth}
    \centering
    \includegraphics[width=1\linewidth]{disturbance_estimate_convergence.eps}
\caption{Evolution of estimated and real disturbance with time}
    \label{fig:dist_acc}      
\end{subfigure}
 \caption{As observed from Fig. \ref{fig:velocity_acc}, the velocity of vehicle converges to the desired velocity $v_0$ using both ICBF as well as DO-ICBF. However, as shown in Fig. \ref{fig:safety_acc}, safety (i.e. $h_x\geq 0$) is guaranteed using the proposed DO-ICBF. Finally, Fig. \ref{fig:dist_acc} shows that the estimated disturbance $\hbd$ converges to the real disturbance $\bd$.}
\label{fig:}
\end{figure*}

In other words, the function $h(\bx,\bu)$ would not be a valid DO-ICBF. To address this limitation, we propose a subset of set $\mathcal{S}$ i.e $\mathcal{S}^m\subset\mathcal{S}$ which would render $\mathcal{S}^m$ forward invariant. Consider the  sequence of functions $b_i(\bx,\bu)$ for all $i\in[0;m]_d$ as follows
\begin{align}
    &b_0(\bx,\bu):=h(\bx,\bu),\nonumber\\ 
&b_1(\bx,\bu,\hbd):=\dot{b}_0(\bx,\bu)+\gamma_1(b_0(\bx,\bu))-c(\bx,\bu,b_0,t),\nonumber\\
&\quad=\underbrace{p(\bx,\bu)v}_{=0}-q(\bx,\bu)+\gamma_1(b_0(\bx,\bu)-c(\bx,\bu,b_0,t))\nonumber\\
&\quad=-q(\bx,\bu)+\gamma_1(b_0(\bx,\bu)-c(\bx,\bu,b_0,t))\nonumber\\
&b_2(\bx,\bu,\hbd):=\dot{b}_1(\bx,\bu)+\gamma_2(b_1(\bx,\bu))-c(\bx,\bu,b_1,t),\nonumber\\
 &\quad\quad   \vdots\nonumber\\
&b_m(\bx,\bu,\hbd):=\dot{b}_{m-1}(\bx,\bu)+\gamma_m(b_{m-1}(\bx,\bu))\nonumber\\
&\quad\quad\quad\quad\quad\quad-c(\bx,\bu,b_{m-1},t)
\end{align}
where $\gamma_1,\;\gamma_2,\dots,\gamma_m$ are class-$\mathcal{K}$ functions. We assume that the functions $F$ and $\ell(\bx)$ are sufficiently smooth such that $b_m$ and its derivative are defined. 
\begin{definition}
    \normalfont The relative degree $m\geq 1$ for system \eqref{eqn:integrated_augmeted_system} is defined as follows:
\begin{align}
    &\frac{\partial b_i(\bx,\bu,\hbd)}{\partial \bu}=0\centernot\implies q_i(\bx, \bu)\leq-c(\bx,\bu,b_{i},t)\;\nonumber\\
&\quad\quad\quad\quad\quad\quad\quad\quad\quad \quad\quad\forall\; i\in[0;m-1]_d\nonumber\\
    &\frac{\partial b_i(\bx,\bu,\hbd)}{\partial \bu}=0\implies q_i(\bx, \bu) \leq -c(\bx,\bu,b_{i},t)\nonumber\\
    &\quad \quad\quad\quad \quad\quad\quad \quad \quad\quad\quad\quad\;\text{for}\;\; i=m
\end{align}
where $q_i(\bx, \bu,\hbd)$ for $i\in[0;m]_d$ is given by
\begin{align}
&q_i(\bx, \bu,\hbd)\nonumber\\
    &=-\left(\frac{\partial b_{i}(\bx,\bu)}{\partial \bx} F(\bx, \bu)+\frac{\partial b_{i}(\bx,\bu)}{\partial \bx} \ell(\bx)\hbd\right.+\nonumber
    \\&\quad\quad\frac{\partial b_{i}(\bx,\bu)}{\partial \bu} \phi(\bx, \bu) +\gamma_i(b_{i}(\bx, \bu))),
\end{align}
In this paper, we assume that such a positive integer $m$ exists
\end{definition}
Now, construct a sequence of sets as follows:
\begin{align}
    \mathcal{S}_i=\{(\bx,\bu)\;:b_i(\bx,\bu,\hbd)\geq 0\},\quad\forall\quad i\in[0;m]_d
    \label{eqn:S_i}
\end{align}
\begin{definition}
\normalfont (High Order DO-ICBFs)
 Under Assumption \ref{assumption:bounded_disturbances}, for system \eqref{eqn:nonlinear_system} and the associated safe set $ \mathcal{S}^m:=\cap_{j=0}^m \mathcal{S}_j $, if the system \eqref{eqn:integrated_augmeted_system} has relative degree $m\geq 1$ and if there exists a function $b_m $ such that 
 \begin{align}
\frac{\partial b_{m}(\bx,\bu) }{\partial \bu}=0 \implies q_m(\bx, \bu) \leq -c(\bx,\bu,b_{m},t) ,
\end{align}
 then the function $b_m$ is termed a High Order DO-ICBF.
\end{definition}

\begin{theorem}
    \normalfont  
 For the controlled system \eqref{eqn:nonlinear_system}, assume a dynamic controller $ \dot{\bu}=\phi(\bx, \bu) $ exists. If the safe set $ \mathcal{S}^m:=\mathcal{S}_0\cap \mathcal{S}_1\dots\cap\mathcal{S}_m \subset \mathbb{R}^n \times \mathbb{R}^m $ where $S_i$ for all $i\in[0;m]$ is defined as in \eqref{eqn:S_i} is characterized by an integral control barrier function, $ h: \mathbb{R}^n \times \mathbb{R}^m \rightarrow \mathbb{R} $, then altering the dynamic controller to:
\begin{align}
\dot{\bu}=\phi(\bx, \bu)+\bv^\star(\bx, \bu,\hbd)
\label{eqn:high_order_do_icbf_u}
\end{align}

where $ \bv^\star $ is the solution to the QP:
\begin{subequations}
\begin{align}
&\bv^\star(\bx, \bu,\hbd) = \underset{\bv \in \mathbb{R}^m}{\operatorname{argmin}}\|\bv\|^2 \\
\text {subject to } &\frac{\partial b_{m}(\bx,\bu) }{\partial \bu}^{\mathrm{T}} \bv \geq w_m(\bx, \bu,\hbd)+c(\bx, \bu,b_{m-1},t)
\end{align}
\label{eqn:qp_for_high_order_do_icbf}
\end{subequations}
where $w_m(\bx,\bu,\hbd)$ is defined as follows:
\begin{align}
   & w_m(\bx, \bu,\hbd): \nonumber\\
    &=-\left(\frac{\partial b_{m}(\bx,\bu)}{\partial \bx} F(\bx, \bu)+\frac{\partial b_{m}(\bx,\bu)}{\partial \bx} \ell(\bx)\hbd\right.+\nonumber
\\&\quad\quad\frac{\partial h(\bx,\bu)}{\partial \bu} \phi(\bx, \bu) +\gamma_m(h(\bx, \bu))),
\end{align}
ensures that the set $\mathcal{S}^m$ is forward invariant. In other words, the controlled system \eqref{eqn:nonlinear_system} combined with the dynamic controller \eqref{eqn:high_order_do_icbf_u} maintains $ \mathcal{S}^m:=\mathcal{S}_0\cap \mathcal{S}_1\dots\cap\mathcal{S}_m \subset \mathbb{R}^n \times \mathbb{R}^m $ as forward invariant. If $ (\bx_0, \bu_0) $ belongs to $ \mathcal{S}^m $, then $ (\bx, \bu) $ remains in $ \mathcal{S}^m $ for all $ t \geq 0 $.
\label{thm:high_order_do_icbf}
\end{theorem}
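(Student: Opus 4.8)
The plan is to combine the disturbance-estimation-error bound established in the proof of the DO-ICBF theorem with the recursive ``peeling'' argument that is standard for high-order barrier functions. Concretely, I would first show that the QP \eqref{eqn:qp_for_high_order_do_icbf} is always feasible; then argue that the resulting $\bv^\star$ forces a differential inequality on $b_m$ along the closed-loop trajectories of \eqref{eqn:augmented_system_with_d_disturbance}; and finally unwind the chain $b_m, b_{m-1}, \dots, b_0 = h$ by repeated use of the invariance lemma (Lemma \ref{lemma:invariant_time}) to conclude that each $\mathcal{S}_i$ of \eqref{eqn:S_i}, hence $\mathcal{S}^m = \cap_{j=0}^m \mathcal{S}_j$, is forward invariant.

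\emph{Feasibility.} For any $(\bx,\bu)$ with $\frac{\partial b_m}{\partial\bu} \neq 0$, the constraint in \eqref{eqn:qp_for_high_order_do_icbf} is a single non-degenerate affine inequality in $\bv \in \mathbb{R}^m$ and is trivially satisfiable. When $\frac{\partial b_m}{\partial\bu} = 0$, the constraint reduces to requiring the right-hand side to be nonpositive, which holds because $b_m$ is a High Order DO-ICBF; hence $\bv = 0$ is feasible. Thus a minimizer $\bv^\star$ exists for every $(\bx,\bu)$ and, by the KKT conditions, has the same closed form as in the relative-degree-one case.

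\emph{Differential inequality for $b_m$.} Substituting the definition of $w_m$ into the active QP constraint and rearranging shows that the part of $\dot b_m$ obtained by using the estimate $\hbd$ in place of $\bd$ in $\dot\bx$ — call it $\widehat{\dot b}_m$ — satisfies $\widehat{\dot b}_m + \gamma_m(b_m) \geq c(\bx,\bu,b_m,t)$. Because the true derivative differs from $\widehat{\dot b}_m$ only by $\frac{\partial b_m}{\partial\bx}\ell(\bx)(\bd - \hbd) = -\frac{\partial b_m}{\partial\bx}\ell(\bx)\ed$, we obtain $\dot b_m + \gamma_m(b_m) \geq c(\bx,\bu,b_m,t) - \|\frac{\partial b_m}{\partial\bx}\ell(\bx)\|\,\|\ed\|$. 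Repeating verbatim the Lyapunov step of the DO-ICBF proof with $V_1 = \frac{1}{2}\|\ed\|^2$ (namely \eqref{eqn:V_dot_upper_bound} together with the comparison bound \eqref{eqn:error_dist_uub}) gives $\|\ed\| \leq c_2(t)$; since $c(\bx,\bu,b_m,t) = \|\frac{\partial b_m}{\partial\bx}\ell(\bx)\|\, c_2(t)$ by construction, the right-hand side above is nonnegative, so $\dot b_m \geq -\gamma_m(b_m)$ along the closed loop.

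\emph{Recursion, and the main difficulty.} Since $(\bx_0,\bu_0) \in \mathcal{S}^m$ gives $b_m(0) \geq 0$, Lemma \ref{lemma:invariant_time} yields $b_m(t) \geq 0$ for all $t$, that is, $\mathcal{S}_m$ is forward invariant. By the construction $b_m = \dot b_{m-1} + \gamma_m(b_{m-1}) - c(\bx,\bu,b_{m-1},t)$, the inequality $b_m \geq 0$ is exactly $\widehat{\dot b}_{m-1} + \gamma_m(b_{m-1}) \geq c(\bx,\bu,b_{m-1},t)$; converting once more from the estimated to the true derivative via $\|\ed\| \leq c_2(t)$ and the matching margin $c(\bx,\bu,b_{m-1},t)$ gives $\dot b_{m-1} \geq -\gamma_m(b_{m-1})$, and Lemma \ref{lemma:invariant_time} with $b_{m-1}(0) \geq 0$ gives $b_{m-1}(t) \geq 0$. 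Iterating this peeling through $b_{m-2}, \dots, b_1, b_0 = h$ shows every $\mathcal{S}_i$, and therefore $\mathcal{S}^m$, is forward invariant, which is the assertion of Theorem \ref{thm:high_order_do_icbf}. The step I expect to be the real obstacle is making the estimated-to-true derivative conversion rigorous at each level of the chain: for $i \geq 1$ the function $b_i$ depends on the observer state $\br$ through $\hbd$, so $\dot b_i$ also contains the terms $\frac{\partial b_i}{\partial\hbd}\dot{\hbd}$ (with $\dot{\hbd} = -\beta L_d\ell(\bx)\ed$) and $\frac{\partial b_i}{\partial t}$, and one must verify that the residual to be absorbed at level $i$ is still dominated in norm by $c(\bx,\bu,b_i,t)$; this relies on the smoothness hypotheses on $F$ and $\ell$ and on checking that the margin prefactor $c_1(\bx,\bu,b_i) = \|\frac{\partial b_i}{\partial\bx}\ell(\bx)\|$ is the right one at every step of the recursion.
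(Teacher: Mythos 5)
Your proposal follows essentially the same route as the paper's proof: the QP constraint enforces the level-$m$ differential inequality, and the chain $b_m\geq 0 \Rightarrow \dot b_{m-1}+\gamma_m(b_{m-1})\geq c\geq 0 \Rightarrow b_{m-1}\geq 0 \Rightarrow \dots \Rightarrow b_0=h\geq 0$ is unwound by repeated application of Lemma \ref{lemma:invariant_time}, exactly as the paper does. Where you go beyond the paper is precisely where it is silent: you argue QP feasibility from the High Order DO-ICBF condition, and you make explicit the conversion from the $\hbd$-based quantities appearing in $w_m$ and in the $b_i$ to the true-$\bd$ derivatives along the closed loop, absorbing the mismatch $\frac{\partial b_i}{\partial \bx}\ell(\bx)\,\ed$ into the margin $c$ via the observer error bound \eqref{eqn:error_dist_uub}; the paper's proof simply asserts the chain of inequalities without this step. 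Two remarks: (i) the ``real obstacle'' you flag --- that for $i\geq 1$ the functions $b_i$ depend on $\hbd$ (hence on $\br$) and explicitly on $t$, so $\dot b_i$ carries additional terms $\frac{\partial b_i}{\partial \hbd}\dot{\hbd}$ and $\frac{\partial b_i}{\partial t}$ not obviously dominated by $c(\bx,\bu,b_i,t)$ --- is a genuine issue, but it is equally unaddressed in the paper's own proof, which treats $b_i\geq 0$ as directly supplying the hypothesis of Lemma \ref{lemma:invariant_time} at level $i-1$; so you are not missing anything the paper supplies, you have merely identified its weakest link honestly. (ii) You use the margin $c(\bx,\bu,b_m,t)$ at the level-$m$ step, whereas the QP \eqref{eqn:qp_for_high_order_do_icbf} as printed uses $c(\bx,\bu,b_{m-1},t)$; your choice is the one that makes the absorption argument close, and the discrepancy appears to be an index slip in the paper rather than an error on your side.
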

\begin{proof}
   If $b_m(\bx,\bu,\hbd)$ is a High Order DO-ICBF, then $b_m(\bx,\bu,\hbd) \geq 0$ for all $t\geq 0$ , i.e., 
   \begin{align}
    \dot{b}_{m}(\bx,\bu,\hbd)+\gamma_m\left(b_{m}(\bx,\bu,\hbd)\right) \geq c(\bx,\bu,b_{m},t)   \nonumber
   \end{align}
   By Lemma \ref{lemma:invariant_time}, since $\bx_0 \in \mathcal{S}_{m-1}$ (i.e., $b_{m-1}\left(\bx_0,\bu_0\right) \geq 0$, and $b_{m-1}(\bx,\bu,\hbd)$ is an explicit form of $\left.b_{m-1}(t)\right)$, then $b_{m-1}(\bx,\bu,\hbd) \geq 0$ for all $t\geq 0$, i.e., 
   \begin{align}
     \dot{b}_{m-1}(\bx,\bu,\hbd)+\gamma_{m-1}\left(b_{m-1}(\bx,\bu,\hbd)\right) \geq c(\bx,\bu,b_{m-1},t)  \nonumber
   \end{align}
   Again, by Lemma \ref{lemma:invariant_time}, since $\bx_0 \in \mathcal{S}_{m-2}$, we also have 
   \begin{align}
       b_{m-2}(\bx,\bu,\hbd) \geq 0\,\quad \forall\;t\geq 0.\nonumber
   \end{align}
   Iteratively, we can show that $\bx \in \mathcal{S}_i$ for all $ i \in[0;m]_d$ and $t\geq 0$. Therefore, the sets $\mathcal{S}_0, \mathcal{S}_1, \ldots, \mathcal{S}_m$ are forward invariant. Consequently the set $\mathcal{S}:=\cap_{j=0}^m \mathcal{S}_j$ is forward invariant.
\end{proof}
\begin{remark}
    \normalfont The optimal solution for \eqref{eqn:qp_for_high_order_do_icbf} via the KKT conditions is given by
\begin{align}
\bv^\star(\bx, \bu, \hbd))= \begin{cases}\frac{f_m(\bx, \bu,\hbd,t)}{\left\|\frac{\partial b_{m}(\bx,\bu) }{\partial \bu}\right\|^2} \frac{\partial b_{m}(\bx,\bu) }{\partial \bu} &\nonumber\\
\quad\quad\quad\quad\quad\quad\quad\quad\quad\text { if } f_m(\bx, \bu,\hbd,t)  >0 \\ 0 \quad\quad\quad\quad\quad\quad\quad\quad\;\;\text { if }f_m(\bx, \bu,\hbd,t) \leq 0\end{cases}\nonumber
\end{align}
where $f_m(\bx, \bu,\hbd,t)=w_m(\bx, \bu,\hbd) +c(\bx,\bu,b_{m-1},t)$. It can be easily shown that the function $\bv^\star(\bx, \bu, \hbd)$ is smooth.
\end{remark}
\begin{table}[]
    \centering
\begin{tabular}{|c|c|c|c|}
\hline Parameter & Value & Parameter & Value \\
\hline \hline $\mathrm{g}$ & 9.81 & $v_0$ & 13.89  \\
\hline $m$ & 1650 & $\alpha$ & 10 \\
\hline$c_0$ & 0.1 &$\gamma$ & 1\\
\hline$c_1$ & 5 & $c$ & 0.3  \\
\hline$c_2$ & 0.25 & $v_d$ & 24 \\
\hline$\beta$ & 1 & $L_d(\bx)$ & $I_3$ \\
% \hline$v_d$ & 24 & \\
% \hline$v_0$ & 13.89 \\
% \hline$\varepsilon$ & 10  \\
% \hline$\gamma$ & 1 \\
% \hline$c_a$ & 0.3  \\
% \hline$c_d$ & 0.3  \\
\hline
\end{tabular}
\caption{Parameters for the adaptive cruise control problem}
\label{table:parameters}
\end{table}
\section{Results\label{sec:results}}
\begin{figure*}[h!]
 % \captionsetup[subfigure]{justification=centering}
 \centering
%   \begin{subfigure}{0.3\textwidth}
% {\includegraphics[scale=0.3]{regions.eps}}
% \caption{}
% \label{fig:casea}
%  \end{subfigure}
 \begin{subfigure}[t]{0.31\textwidth}
{\includegraphics[width=1\linewidth]{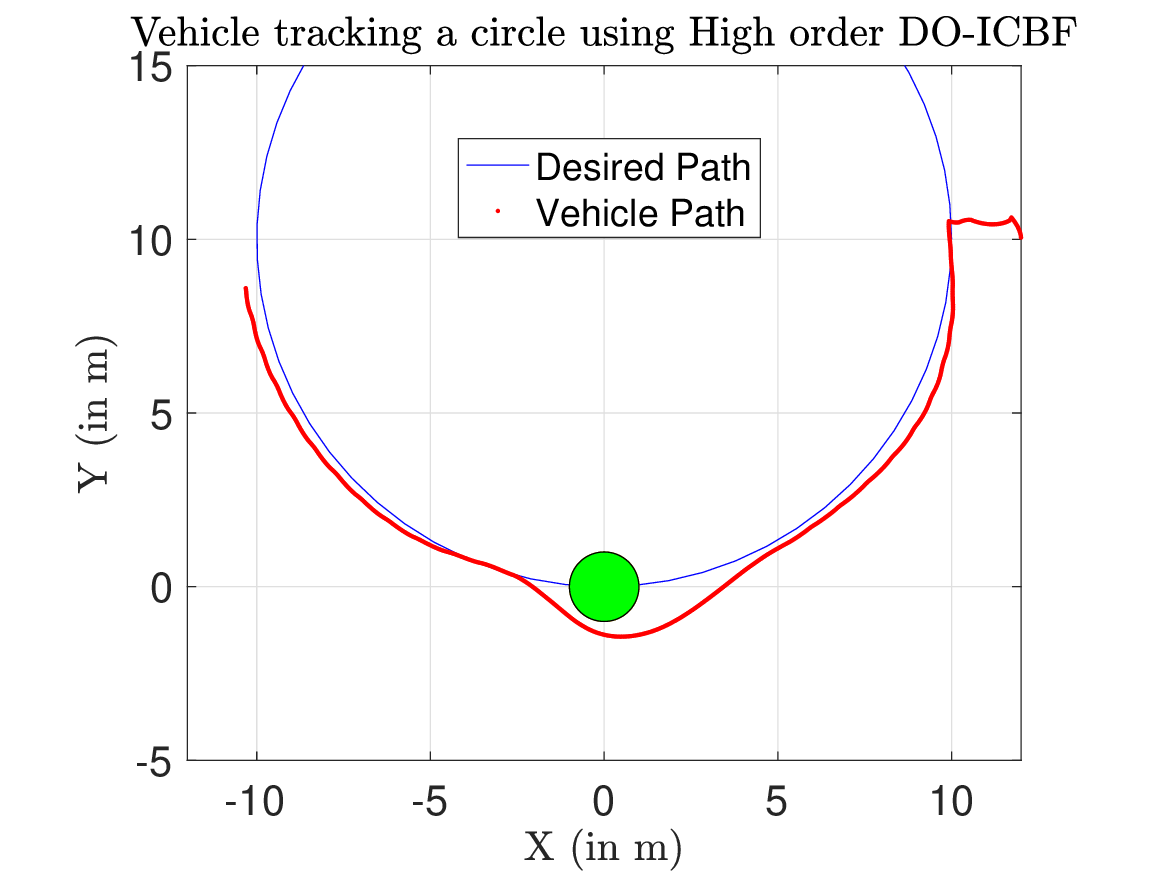}}
\caption{$x$ vs $y$}
\label{fig:stanley_x_and_y}
 \end{subfigure}
 \begin{subfigure}[t]{0.31\textwidth}
{\includegraphics[width=1\linewidth]{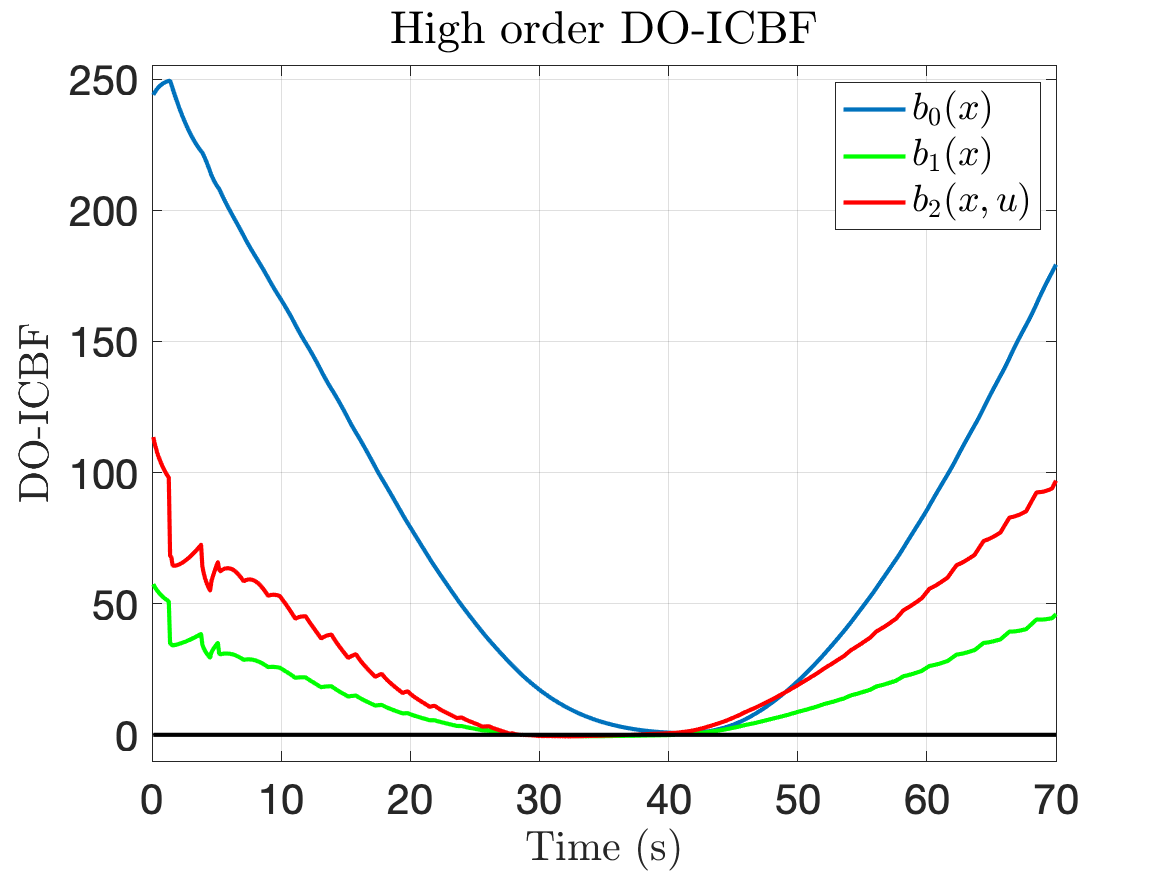}}
 \caption{Evolution of $b_0(\bx)$, $b_1(\bx)$, and $b_2(\bx,\bu)$ versus time}
\label{fig:b_2_variation}
 \end{subfigure}
 \begin{subfigure}[t]{0.31\textwidth}
{\includegraphics[width=1\linewidth]{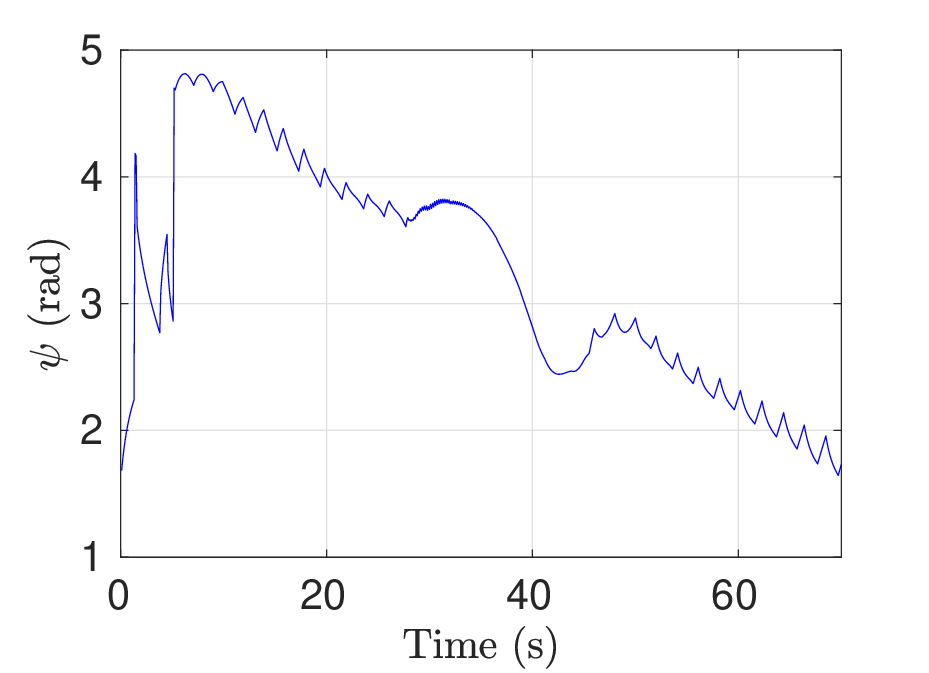}}
\caption{Evolution of yaw angle $\psi$ with time}
\label{fig:psi_variaton}
\end{subfigure}
\caption{As shown in Fig. \ref{fig:stanley_x_and_y}, the vehicle governed by non-control affine dynamics avoids the green obstacle using the second order DO-ICBF. In addition, Fig. \ref{fig:b_2_variation} shows that $b_i(\bx,\bu)$ for $i\in[0;2]_d$ remains positive throughout the travel duration thereby validating Theorem \ref{thm:high_order_do_icbf}. Finally, Fig. \ref{fig:psi_variaton} shows the variation of $\psi$ with time $t$. }
\label{fig:vehicle_tracking_using_stanley_controller}
\end{figure*}
In this section, we consider two examples to validate the proposed DO-ICBF and High Order DO-ICBF's. First, is the standard control affine adaptive cruise control problem with additive bounded disturbances where the vehicle is considered to be a point mass. Second, to illustrate High Order DO-ICBF's we consider a more general non-control affine vehicle dynamics (with no additive bounded disturbances) which is based on the bicycle model and incorporates the Ackermann steering dynamics as well. The source code is available at \href{https://github.com/Vrushabh27/High-Order-DO-ICBF}{https://github.com/Vrushabh27/High-Order-DO-ICBF}
\subsection{Adaptive Cruise Control (ACC)\label{subsec:adaptive_control}}
 The system's dynamics can be expressed as:
\begin{align}
\dot{\bx}=\left[\begin{array}{c}
x_2 \\
-\frac{1}{m} F_r(\bx) \\
v_0-x_2
\end{array}\right]+\left[\begin{array}{c}
0 \\
\frac{1}{m} \\
0
\end{array}\right] u+\left[\begin{array}{c}
0 \\
1 \\
0
\end{array}\right]d
\label{eqn:acc_dynamics}
\end{align}

Here, $\bx= [x_1, x_2]^\mathrm{T} $ denotes the car's position and speed respectively, $ m $ denotes the vehicle's mass, $ x_3 $ is the gap between this vehicle and the leading one moving at a speed $ v_0 $ and $d=2\mathrm{m/s}$ is the constant additive bounded disturbance which is assumed to be unknown. The term $ F_r(\bx) $ denotes the empirical rolling resistance, defined as $ F_r(\bx) = c_0+c_1 x_2+c_2 x_2^2 $. The objective is to steer the vehicle to a target velocity i.e. $x_2 \to v_d $, which is articulated as an output: $ y=n(\bx)=x_2-v_d $. By forward integrating of dynamics \eqref{eqn:acc_dynamics} with $ c_2=0 $ (thus taking a linear approximation), the output becomes \cite{ames2020integral_cbf}:
\begin{align}
\hat{y}(t+T)=&-c_1^{-1}\left(c_0-u+m v_d-c_1 e^{-\frac{c_1 T}{m}}\right.\nonumber\\
&\left.\left(x_2(t)+\frac{c_0-u+m v_d}{c_1}\right)\right)
\end{align}

Consequently, the control law $\phi(\bx,\bu)$ is:

\begin{align}
\dot{u}(t)=\alpha c_1\left(e^{\frac{-c_1}{m} T}-1\right)^{-1} \hat{y}(t+T)=: \phi(\bx, u,t) .
\end{align}

Safety constraints for the state follow the "half the speedometer rule", resulting in the CBF 
% (given that $ p_x(x)=-1.8 / m \neq 0 $):
\begin{align}
h_x(\bx)=x_3-1.8 x_2 \geq 0 \quad \rightarrow \quad \mathcal{S}_x=\left\{\bx:\;h_x(\bx) \geq 0\right\} .\nonumber
\end{align}
The restriction for the input is based on the fact that the force exerted on the wheel must be upper bounded: $ \|u\| \leq m c g $, which gives rise to a DO-ICBF:
\begin{align}
h_u(\bx)=\left(m c g\right)^2-u^2 \geq 0 \quad \rightarrow \quad \mathcal{S}_u=\left\{u:\;h_u(u) \geq 0\right\} .\nonumber
\end{align}
Here, $ c $ represents the $ g $-factor for acceleration and deceleration. The parameters for the simulation are given in Table \ref{table:parameters}.

\subsection{Vehicle model\label{subsec:vehicle_model}}
To model the vehicle, we use the bicycle model which is a simplified representation of a ground vehicle used in control and simulation studies. 
The bicycle model, often used to represent the dynamics of vehicles, captures the essential lateral and longitudinal dynamics by considering the vehicle as a two-wheeled bicycle. 
% It abstracts away many complexities of a real vehicle, making it particularly useful for control design and simulations.
The dynamics of the vehicle based on the bicycle model are then given by:
% \begin{subequations}
    \begin{align}
 \dot{x} = v \cos\psi,\quad\dot{y} = v \sin\psi,\quad\dot{\psi} = \frac{v \tan\delta}{L},\quad\dot{v}=a    
\end{align}
% \end{subequations}
where $(x,y)$ are the coordinates of the center of gravity (CG) of the vehicle, $ \psi $ is its heading angle relative to the inertial frame, $ \delta $ is the steering angle of the front wheel, $ v=0.5\mathrm{m/s} $ is the speed of the vehicle (which is assumed to be constant for the simulations), $a$ is the acceleration and $ L $ is the wheelbase of the vehicle or the distance between the front and rear axles. The state is $\bx:=[x,\;y,\;\psi,\;v]^\mathrm{T}$ and the control input is $u=\delta$ (the acceleration $a$ which is a input is assumed to constant for the simulations). The Stanley controller is a lateral controller, designed to minimize the cross-track error, which is the perpendicular distance from the vehicle to the reference path. 
% One of its main benefits is its simplicity, while still offering good performance.
The steering command $ \delta $ using the Stanley controller is given by:
\begin{align}
    \delta_S = \theta + \arctan \left( \frac{k e}{v} \right) \nonumber 
\end{align}
where $ k>0 $,  $ e $ is the cross-track error and $ \theta $ is the orientation of the path at the closest point to the vehicle. In this case, the integral control law $\phi(\bx,\bu)$ is $\dot{\delta}_S$.

The objective is to track the reference trajectory (denoted by the blue curve in Fig. \ref{fig:stanley_x_and_y}) from the initial state $\bx_0=[15,\;10,\;\pi/2,\;0.5]^\mathrm{T}$ in the presence of an obstacle located at the origin with a radius equal to one. Towards this aim, we define the DO-ICBF as follows:
\begin{align}
b_0(\bx,\bu)=x^2+y^2-1\nonumber
\end{align}
Consequently, the DO-ICBF condition boils down to
\begin{align}
b_1(\bx,\bu):=&\dot{b}_0(\bx,\bu)+\gamma_1(b_0(\bx,\bu))\nonumber\\
=&2x\dot{x}+2y\dot{y}+\gamma_1(b_0(\bx,\bu))\nonumber\\
=&2xv\cos\psi+2yv\sin\psi+\gamma_1(b_0(\bx,\bu))\geq 0\nonumber
\end{align}
Note that $b_1(\bx,\bu)$ does not contain any input term $u$ or integral term $\dot{u}$. Towards that aim, we consider a second order DO-ICBF $b_2(\bx,\bu)$ given by the following
\begin{align}
b_2(\bx,\bu):=&\dot{b}_1(\bx,\bu)+\gamma_2(b_1(\bx,\bu))\nonumber
\end{align}
Note that $\dot{b}_1(\bx,\bu)$ contains the term $\dot{u}$. For the simulations, we consider $\gamma_1(p)=0.2p$, $\gamma_2(p)=p$ and $\gamma_3(p)=p$ for scalar $p$. To ensure that the vehicle does not collide with the obstacle located at the origin we modify the integral term $\dot{u}=\dot{\delta}_S$ to the following
\begin{align}
   \dot{u}=\dot{\delta}_S+v^\star(\bx,u,\hbd) 
\end{align}
where $v^\star$ is obtained by iteratively solving the following QP
\begin{subequations}
\begin{align}
&v^\star(\bx, u,\hbd) = \underset{v \in \mathbb{R}^m}{\operatorname{argmin}}\|v\|^2 \\
\text {subject to } &\frac{\partial b_{2}(\bx,u) }{\partial u}^{\mathrm{T}} v \geq w_2(\bx, u,\hbd)
\end{align}
where $\hbd=0$ in this case. Fig. \ref{fig:stanley_x_and_y} shows the vehicle following (in red) the reference trajectory in blue. Fig. \ref{fig:psi_variaton} shows the variation of the heading angle $\psi$. As shown in Fig. \ref{fig:b_2_variation}, the values of High Order DO-ICBFs $b_1$ and $b_2$ always remains positive with time implying that the set $\mathcal{S}^2:=\mathcal{S}_0\cap\mathcal{S}_1\cap\mathcal{S}_2$ where $\mathcal{S}_0=\{(\bx,\bu):b_0(\bx)\geq 0\}$ , $\mathcal{S}_1=\{(\bx,\bu):b_1(\bx)\geq 0\}$ and $\mathcal{S}_2=\{(\bx,\bu):b_2(\bx,\bu)\geq 0\}$  .
\label{eqn:qp_for_high_order_do_icbf_stanley}
\end{subequations}
\section{Conclusion\label{sec:conclusion}}
In this paper, we presented Disturbance Observer based Integral Control Barrier functions (DO-ICBFs) for safe control synthesis for general nonlinear controlled systems with bounded additive disturbances. Next, we extended the approach to systems with higher relative degree with respect to the auxillary control variable and proposed High Order DO-ICBFs to guarantee safety. One of the interesting directions for future work would be the application of proposed DO-ICBFs to real world applications such as bio-inspired robotic systems such as flapping wing robot in presence of wind gust and collision avoidance for a hypersonic vehicle.
\bibliography{main.bib}

% Generated by IEEEtran.bst, version: 1.14 (2015/08/26)
\begin{thebibliography}{10}
\providecommand{\url}[1]{#1}
\csname url@samestyle\endcsname
\providecommand{\newblock}{\relax}
\providecommand{\bibinfo}[2]{#2}
\providecommand{\BIBentrySTDinterwordspacing}{\spaceskip=0pt\relax}
\providecommand{\BIBentryALTinterwordstretchfactor}{4}
\providecommand{\BIBentryALTinterwordspacing}{\spaceskip=\fontdimen2\font plus
\BIBentryALTinterwordstretchfactor\fontdimen3\font minus
  \fontdimen4\font\relax}
\providecommand{\BIBforeignlanguage}[2]{{%
\expandafter\ifx\csname l@#1\endcsname\relax
\typeout{** WARNING: IEEEtran.bst: No hyphenation pattern has been}%
\typeout{** loaded for the language `#1'. Using the pattern for}%
\typeout{** the default language instead.}%
\else
\language=\csname l@#1\endcsname
\fi
#2}}
\providecommand{\BIBdecl}{\relax}
\BIBdecl

\bibitem{breeden2023robust_aerospace}
J.~Breeden and D.~Panagou, ``Robust control barrier functions under high
  relative degree and input constraints for satellite trajectories,''
  \emph{Automatica}, vol. 155, p. 111109, 2023.

\bibitem{chandra2023decentralized_robotics_1}
R.~Chandra, V.~Zinage, E.~Bakolas, J.~Biswas, and P.~Stone, ``Decentralized
  multi-robot social navigation in constrained environments via game-theoretic
  control barrier functions,'' \emph{arXiv preprint arXiv:2308.10966}, 2023.

\bibitem{zinage2023neural_cbf}
V.~Zinage and E.~Bakolas, ``Neural koopman control barrier functions for
  safety-critical control of unknown nonlinear systems,'' in \emph{2023
  American Control Conference (ACC)}.\hskip 1em plus 0.5em minus 0.4em\relax
  IEEE, 2023, pp. 3442--3447.

\bibitem{jankovic2022multi_agent_systems}
M.~Jankovic, M.~Santillo, and Y.~Wang, ``Multi-agent systems with cbf-based
  controllers--collision avoidance and liveness from instability,'' \emph{arXiv
  preprint arXiv:2207.04915}, 2022.

\bibitem{lee2022feedback_hypersonic}
Y.~Lee, E.~Bakolas, and M.~R. Akella, ``Feedback strategies for hypersonic
  pursuit of a ground evader,'' in \emph{2022 IEEE Aerospace Conference
  (AERO)}.\hskip 1em plus 0.5em minus 0.4em\relax IEEE, 2022, pp. 1--7.

\bibitem{chirarattananon2017dynamics_flapping_birds}
P.~Chirarattananon, Y.~Chen, E.~F. Helbling, K.~Y. Ma, R.~Cheng, and R.~J.
  Wood, ``Dynamics and flight control of a flapping-wing robotic insect in the
  presence of wind gusts,'' \emph{Interface focus}, vol.~7, no.~1, p. 20160080,
  2017.

\bibitem{zhang2022bio_flapping_bird_2}
J.~Zhang, N.~Zhao, and F.~Qu, ``Bio-inspired flapping wing robots with foldable
  or deformable wings: a review,'' \emph{Bioinspiration \& Biomimetics}, 2022.

\bibitem{george2020first_soft_robots}
T.~George~Thuruthel, F.~Renda, and F.~Iida, ``First-order dynamic modeling and
  control of soft robots,'' \emph{Frontiers in Robotics and AI}, vol.~7, p.~95,
  2020.

\bibitem{rus2015design_soft_robots_2}
D.~Rus and M.~T. Tolley, ``Design, fabrication and control of soft robots,''
  \emph{Nature}, vol. 521, no. 7553, pp. 467--475, 2015.

\bibitem{bascetta2016kinematic_ackermann}
L.~Bascetta, D.~A. Cucci, and M.~Matteucci, ``Kinematic trajectory tracking
  controller for an all-terrain ackermann steering vehicle,''
  \emph{IFAC-PapersOnLine}, vol.~49, no.~15, pp. 13--18, 2016.

\bibitem{ames2020integral_cbf}
A.~D. Ames, G.~Notomista, Y.~Wardi, and M.~Egerstedt, ``Integral control
  barrier functions for dynamically defined control laws,'' \emph{IEEE Control
  Systems Letters}, vol.~5, no.~3, pp. 887--892, 2020.

\bibitem{alan2023parameterized_robust_1}
A.~Alan, T.~G. Molnar, A.~D. Ames, and G.~Orosz, ``Parameterized barrier
  functions to guarantee safety under uncertainty,'' \emph{IEEE Control Systems
  Letters}, 2023.

\bibitem{jankovic2018robust_2}
M.~Jankovic, ``Robust control barrier functions for constrained stabilization
  of nonlinear systems,'' \emph{Automatica}, vol.~96, pp. 359--367, 2018.

\bibitem{nguyen2021robust_3}
Q.~Nguyen and K.~Sreenath, ``Robust safety-critical control for dynamic
  robotics,'' \emph{IEEE Transactions on Automatic Control}, vol.~67, no.~3,
  pp. 1073--1088, 2021.

\bibitem{garg2021robust_4}
K.~Garg and D.~Panagou, ``Robust control barrier and control lyapunov functions
  with fixed-time convergence guarantees,'' in \emph{2021 American Control
  Conference (ACC)}.\hskip 1em plus 0.5em minus 0.4em\relax IEEE, 2021, pp.
  2292--2297.

\bibitem{alan2022disturbance_robust_6}
A.~Alan, T.~G. Molnar, E.~Da{\c{s}}, A.~D. Ames, and G.~Orosz, ``Disturbance
  observers for robust safety-critical control with control barrier
  functions,'' \emph{IEEE Control Systems Letters}, vol.~7, pp. 1123--1128,
  2022.

\bibitem{taylor2021towards_robust_7}
A.~J. Taylor, V.~D. Dorobantu, S.~Dean, B.~Recht, Y.~Yue, and A.~D. Ames,
  ``Towards robust data-driven control synthesis for nonlinear systems with
  actuation uncertainty,'' in \emph{2021 60th IEEE Conference on Decision and
  Control (CDC)}.\hskip 1em plus 0.5em minus 0.4em\relax IEEE, 2021, pp.
  6469--6476.

\bibitem{jankovic2018robust_robust_9}
M.~Jankovic, ``Robust control barrier functions for constrained stabilization
  of nonlinear systems,'' \emph{Automatica}, vol.~96, pp. 359--367, 2018.

\bibitem{emam2021data_robust_10}
Y.~Emam, P.~Glotfelter, S.~Wilson, G.~Notomista, and M.~Egerstedt,
  ``Data-driven robust barrier functions for safe, long-term operation,''
  \emph{IEEE Transactions on Robotics}, vol.~38, no.~3, pp. 1671--1685, 2021.

\bibitem{jankovic2018robust_cbf_1}
M.~Jankovic, ``Robust control barrier functions for constrained stabilization
  of nonlinear systems,'' \emph{Automatica}, vol.~96, pp. 359--367, 2018.

\bibitem{nguyen2021robust_cbf_2}
Q.~Nguyen and K.~Sreenath, ``Robust safety-critical control for dynamic
  robotics,'' \emph{IEEE Transactions on Automatic Control}, vol.~67, no.~3,
  pp. 1073--1088, 2021.

\bibitem{alan2021safe_iss_cbf_1}
A.~Alan, A.~J. Taylor, C.~R. He, G.~Orosz, and A.~D. Ames, ``Safe controller
  synthesis with tunable input-to-state safe control barrier functions,''
  \emph{IEEE Control Systems Letters}, vol.~6, pp. 908--913, 2021.

\bibitem{alan2023control_iss_cbf_2}
A.~Alan, A.~J. Taylor, C.~R. He, A.~D. Ames, and G.~Orosz, ``Control barrier
  functions and input-to-state safety with application to automated vehicles,''
  \emph{IEEE Transactions on Control Systems Technology}, 2023.

\bibitem{lopez2020robust_adaptive}
B.~T. Lopez, J.-J.~E. Slotine, and J.~P. How, ``Robust adaptive control barrier
  functions: An adaptive and data-driven approach to safety,'' \emph{IEEE
  Control Systems Letters}, vol.~5, no.~3, pp. 1031--1036, 2020.

\bibitem{sariyildiz2019disturbance_dist_1}
E.~Sariyildiz, R.~Oboe, and K.~Ohnishi, ``Disturbance observer-based robust
  control and its applications: 35th anniversary overview,'' \emph{IEEE
  Transactions on Industrial Electronics}, vol.~67, no.~3, pp. 2042--2053,
  2019.

\bibitem{mohammadi2017nonlinear_dist_2}
A.~Mohammadi, H.~J. Marquez, and M.~Tavakoli, ``Nonlinear disturbance
  observers: Design and applications to euler lagrange systems,'' \emph{IEEE
  Control Systems Magazine}, vol.~37, no.~4, pp. 50--72, 2017.

\bibitem{chen2015disturbance_dist_3}
W.-H. Chen, J.~Yang, L.~Guo, and S.~Li, ``Disturbance-observer-based control
  and related methods—an overview,'' \emph{IEEE Transactions on Industrial
  Electronics}, vol.~63, no.~2, pp. 1083--1095, 2015.

\bibitem{yang2013static_dist_4}
J.~Yang, W.-H. Chen, S.~Li, and X.~Chen, ``Static disturbance-to-output
  decoupling for nonlinear systems with arbitrary disturbance relative
  degree,'' \emph{International Journal of Robust and Nonlinear Control},
  vol.~23, no.~5, pp. 562--577, 2013.

\bibitem{ginoya2013sliding_dist_5}
D.~Ginoya, P.~Shendge, and S.~Phadke, ``Sliding mode control for mismatched
  uncertain systems using an extended disturbance observer,'' \emph{IEEE
  Transactions on Industrial Electronics}, vol.~61, no.~4, pp. 1983--1992,
  2013.

\bibitem{wang2015extended_dist_6}
J.~Wang, S.~Li, J.~Yang, B.~Wu, and Q.~Li, ``Extended state observer-based
  sliding mode control for pwm-based dc--dc buck power converter systems with
  mismatched disturbances,'' \emph{IET control theory \& applications}, vol.~9,
  no.~4, pp. 579--586, 2015.

\bibitem{glotfelter2017nonsmooth_rd_2}
P.~Glotfelter, J.~Cort{\'e}s, and M.~Egerstedt, ``Nonsmooth barrier functions
  with applications to multi-robot systems,'' \emph{IEEE Control Systems
  Letters}, vol.~1, no.~2, pp. 310--315, 2017.

\bibitem{ames2019control}
A.~D. Ames, S.~Coogan, M.~Egerstedt, G.~Notomista, K.~Sreenath, and P.~Tabuada,
  ``Control barrier functions: Theory and applications,'' in \emph{2019 18th
  European Control Conference (ECC)}.\hskip 1em plus 0.5em minus 0.4em\relax
  IEEE, 2019, pp. 3420--3431.

\end{thebibliography}

\end{document}